\newtheorem{Theorem}{Theorem}[section]
\newtheorem{Def}[Theorem]{Definition}
\newtheorem{Lemma}[Theorem]{Lemma}
\newtheorem{Note}[Theorem]{Note}
\newtheorem{Example}[Theorem]{Example}
\newcommand{\E}{{\sf E}}
\newcommand{\eps}{{\varepsilon}}
\renewcommand{\P}{{\sf P}}
\newcommand{\R}{\mathbb R}
\newcommand{\Str}{\mathcal S}
\renewcommand{\:}{\colon}
\newcommand{\where}{\ |\ }
\newcommand{\bigwhere}{\ \big|\ }
\definecolor{red}{rgb}{1,0,0}
\definecolor{redish}{rgb}{0.8,0,0}
\definecolor{green}{rgb}{0,0.55,0}
\definecolor{blue}{rgb}{0,0,1}
\definecolor{lblue}{rgb}{0.5,0,1}
\definecolor{grey}{rgb}{0.5,0.5,0.5}
\definecolor{lgrey}{rgb}{0.88,0.88,0.88}
\definecolor{orange}{rgb}{1, 0.7, 0}
\definecolor{lorange}{rgb}{1, 1, 0.5}
\newcommand{\cred}{\color{red}}
\newcommand{\cblue}{\color{blue}}
\newcommand{\cgreen}{\color{green}}
\title{
A Robust Efficient Dynamic Mechanism 
}
\author{Endre Csóka\\{\normalsize Alfréd Rényi Institute of Mathematics, Budapest, Hungary}}
\date{}
\begin{document}
\maketitle
\begin{abstract}
Athey and Segal introduced an efficient budget-balanced mechanism for a dynamic stochastic model with quasilinear payoffs and private values, using the solution concept of perfect Bayesian equilibrium (PBE) \cite{AtSe}.
We show that this implementation is not robust in multiple senses, especially for at least 3 agents. For example, we will show a generic setup where all efficient strategy profiles can be eliminated by iterative elimination of weakly dominated strategies.
Furthermore, this model used strong assumptions about the information of the agents, and the mechanism was not robust to the relaxation of these assumptions.
In this paper, we will show a different mechanism that implements efficiency under weaker assumptions and uses the stronger solution concept of ``efficient Nash equilibrium with guaranteed expected payoffs''.
\end{abstract}

\section{Introduction}


The Vickrey--Clarke--Groves (VCG) mechanism \cite{Vickrey,Clarke,Groves} established the existence of an efficient, incentive-compatible mechanism for a general class of static mechanism design problems with private values and quasilinear preferences. 
Subsequently, Arrow \cite{arrow1979property} 
and d'Aspremont and Gérard-Varet 
(AGV) \cite{d1979incentives} constructed an efficient, incentive-compatible mechanism in which the transfers were also budget-balanced, using the solution concept of Bayesian–Nash equilibrium, under the additional assumption that private information is independent across agents.
In dynamic mechanism design problems with private values, Bergemann and Välimäki \cite{BeVa} 
and Athey and Segal \cite{AtSe} 
defined dynamic extensions of the VCG and AGV mechanisms.

In this paper, we will refer to the Athey--Segal balanced Team Mechanism.
We will show a very generic setup with at least 3 agents where the truthful PBE is degenerate in multiple senses: unstable, and not robust to the idealized information structure. Then we will also show that a different mechanism implements efficiency in a very stable and robust way, including even collusion-resistance.
This mechanism coincides with the balanced Team Mechanism only for 2 agents and with no same-round randomness in the private types.

\subsection{The setup and the balanced Team Mechanism}\label{sec:recall-setup}

We assume that the Reader knows the setup and the balanced Team Mechanism, but we give a short summary about them.

\medskip

We have a set $N = \{1, 2, ..., n\}$ of agents with fixed initial types $\theta^N_0 \in \Theta^N$ that are publicly known even to the designer, so the mechanism depends on the initial types $\theta^N_0$.
There are $T$ number of rounds. In each round $t$, a public decision $x_t \in X$ is made by the designer, and each agent $i$ gets utility $u(x_t, \theta^i_t)$ where $\theta^i_t \in \Theta$ is his current type. Between consecutive rounds, the type of each agent $i$ is changed by a stochastic function of the previous-round type and the public decision.\footnote{We note that the full version of the model and the mechanism includes private decisions. This was included only in the manuscript version of the Athey--Segal paper \cite{AtSe-old}. As our counterexample works in this restricted model, it works in the full model, as well. Both versions of the paper also allow verifiable public states which we also skipped from our summary because on the one hand, these can be replaced by an extra agent who cannot lie, and on the other hand, we will not use them in our counterexamples.}
Formally, each $\theta_{t+1}^i$ is chosen from the probability distribution $\mu(x_t, \theta_t^i) \in \Delta(\Theta)$, these randomizations are independent (and $\mu$ is fixed as a rule of the game). The agents can report their types, and the designer can prescribe monetary transfers between the agents $y^i$ with $\sum_i y^i = 0$. Every agent $i$ maximizes his expected total utility plus transfer $\E\big(\sum_t u(x_t, \theta^i_t) + y^i\big)$. Our goal is to maximize the total expected utility.

The information of each agent consists of his own type history, the public reports, and the (implied) public decisions.
It is strictly assumed that the agents know nothing else.
For example, no agent is capable to reveal to other agents any information correlated with his own past.

\begin{Def}[{\bf balanced Team Mechanism}]
We fix a decision strategy $\xi: \{1, 2, ..., T\} \times \Theta^N \to X$ which would maximize total expected utility. In every round $t$, we ask each agent to make a report $\widehat{\theta}^i_t$ about his type. We make the public decision $\xi(t, \widehat{\theta}^N_t)$. For each report $\widehat{\theta}^i_t$, the designer calculates how it changes the total of the expected utilities of others, given the (reported) types of all agents in the previous round $\widehat{\theta}^N_{t-1}$.
Formally,
\begin{equation} \label{gammadef}
    \gamma^i_t = \sum_{j \in N \setminus\{i\}} \sum_{t' = t}^T  \E \Big( \upsilon^{j, \xi}_{t'}\big(\widehat{\theta}^i_t,\  \widehat{\theta}^{N \setminus \{i\}}_{t-1}\big) - \upsilon^{j, \xi}_{t'}\big(\widehat{\theta}^N_{t-1}\big) \Big),
\end{equation}
where $\upsilon^{j, \xi}_{t'}$ expresses the utility of $j$ in round $t'$ given the type profile, and assuming truthful reporting strategies and decision strategy $\xi$. Each agent $i$ gets this signed transfer $\gamma^i_t$ paid equally by the other agents, namely, the total transfer to $i$ is 
\begin{equation} \label{eq:y-def}
y^i = \sum\limits_{t=1}^T \Big( \gamma^i_t - \frac{1}{n-1} \sum\limits_{j \in N \setminus \{i\}} \gamma^j_t \Big).
\end{equation}
\end{Def}

\section{The balanced Team Mechanism is not robust} \label{sec:wrong-ex}

In this section, we are pointing out one of the issues with the mechanism, temporarily accepting the idealized information structure of the model.
Our simplest example reveals only one aspect of instability, and this setup will have degeneracies. But then we will show more difficult and generic setups which will show even more serious instabilities.

\subsection{Our simplest example for instability}

\begin{Example} \label{general-example}
We have a large number $n \ge 3$ of agents.
Two agents have active roles: \textnormal{\cblue Blue} and \textnormal{\cred Red}, and an agent \textnormal{\cgreen Green} has a passive role. The utilities of others are 0, so their only role is paying the transfers $\gamma$ to \textnormal{\cblue Blue} and \textnormal{\cred Red}.
The game consists of $K \ge 2$ rounds.
At the end of the last round, \textnormal{\cblue Blue} and \textnormal{\cred Red} both have a final type HIGH or LOW, and there is a public decision to be made: \textnormal{YES} or \textnormal{NO}.
If \textnormal{NO}, then the utility of every agent is 0.
If \textnormal{YES}, then the utility of $(\textnormal{\cblue Blue}, \textnormal{\cred Red}, \textnormal{\cgreen Green})$ is $({\cblue 1\textnormal{ or }4}, {\cred 1\textnormal{ or }4}, {\cgreen -6})$.
Whether $1 \textnormal{ or } 4$ depends on the last-round type of the player: $1$ with \textnormal{LOW} or $4$ with \textnormal{HIGH}.
Therefore, the efficient decision is \textnormal{YES} if and only if both agents have \textnormal{HIGH} type.
\end{Example}

\begin{Note}
Example~\ref{general-example} is essentially a single-round setup, but we consider earlier rounds in which nothing happens except that the types of the agents are updated. We could easily extend the setup to give some sense to the earlier rounds, see Appendix~\ref{app:roundsense}.
\end{Note}

For $K = 2$ rounds, we show the table of expected payoffs of {\cblue Blue} and {\cred Red} with three strategies (to be explained shortly) for each of them. The top left cell (with yellow background) is the truthful equilibrium. The strategy in the second row (and column) weakly dominates the first row (and column, respectively).

\begin{center}
{\renewcommand{\arraystretch}{1.3}
\begin{tabular}[center]{c||c|c|c|}

\multirow{2}{*}{$\E\big($payoffs$(${\cblue Blue}, {\cred Red}$)\big)$}
& \cellcolor{lgrey} 
& \cellcolor{lgrey} ${\cred \widehat{p}^{red}_t} = {\cred p^{red}_t}$ except:
& \cellcolor{lgrey} ${\cred \widehat{p}^{red}_1} = 0$; \ ${\cred \widehat{p}^{red}_2} = \big(${\bf if} \\
& \multirow{-2}{*}{\cellcolor{lgrey} ${\cred \widehat{p}^{red}_t} = {\cred p^{red}_t}$}
& \cellcolor{lgrey} ${\cblue \widehat{p}^{blue}_1} = 0\ {\cred \Rightarrow \widehat{p}^{red}_2} = 1$
& \cellcolor{lgrey} ${\cblue \widehat{p}^{blue}_1} = 0$ {\bf then} 1 {\bf else} ${\cred p^{red}_2}\big)$ \\
\hline
\hline
\cellcolor{lgrey} & \cellcolor{lorange} & & \\
\cellcolor{lgrey} \multirow{-2}{*}{${\cblue \widehat{p}^{blue}_t} = {\cblue p^{blue}_t}$}
& \cellcolor{lorange}\multirow{-2}{*}{$({\cblue 1},\ {\cred 1})$}
& \multirow{-2}{*}{$({\cblue 1},\ {\cred 1})$}
& \multirow{-2}{*}{$({\cblue 1},\ {\cred 1})$} \\
\hline
\cellcolor{lgrey} ${\cblue \widehat{p}^{blue}_t} = {\cblue p^{blue}_t}$ except:
& \multirow{2}{*}{$({\cblue 1},\ {\cred 1})$}
& \multirow{2}{*}{$({\cblue 1},\ {\cred 1})$}
& \multirow{2}{*}{$({\cblue 2},\ {\cred 2})$} \\
\cellcolor{lgrey} ${\cred \widehat{p}^{red}_1} = 0\ {\cblue \Rightarrow \widehat{p}^{blue}_2} = 1$
&
&
& \\
\hline
\cellcolor{lgrey} ${\cblue \widehat{p}^{blue}_1} = 0$; \ ${\cblue \widehat{p}^{blue}_2} = \big(${\bf if}
& \multirow{2}{*}{$({\cblue 1},\ {\cred 1})$}
& \multirow{2}{*}{$({\cblue 2},\ {\cred 2})$}
& \multirow{2}{*}{$\big({\cblue 3 - \frac{1}{2(n-1)}},\ {\cred 3 - \frac{1}{2(n-1)}}\big)$} \\
\cellcolor{lgrey} ${\cred \widehat{p}^{red}_1} = 0$ {\bf then} 1 {\bf else} ${\cblue p^{blue}_2}\big)$
&
&
& \\
\hline
\end{tabular}
}
\end{center}


\begin{Note}
If we replaced ${\cgreen -6}$ to, say, ${\cgreen -3}$, and therefore, the efficient decision is \textnormal{YES} if and only if {\bf either or both} agents have \textnormal{HIGH} type, then we would see the same issue.
\end{Note}

This is our simplest but not nearly our strongest example for instability.
We will see that the truthful strategy profile is even less stable by having more rounds or relaxing the idealized information structure.

This example is degenerate in multiple senses, but these are not the cause of instability. 
In Appendix~\ref{app:details}, we will show a more convincing but more difficult version of this example with a more complete analysis. 


\subsection{Explanation of the example and generalizations}

The balanced Team Mechanism means the following. In round $t$, {\cblue Blue} and {\cred Red} have to report about the current probability ${\cblue p^{blue}_t}$ and ${\cred p^{red}_t}$ (respectively) that his final type will be HIGH. These reports are denoted by ${\cblue \widehat{p}^{blue}_t}$ and ${\cred \widehat{p}^{red}_t}$, respectively. (In fact, they should report their types, but the mechanism will use only these probabilities.) Then we have the following transfers
\begin{alignat}{1}
{\cblue \gamma^{blue}_t} &= 
({\cred 4}\ {\cgreen -\ 6}) \cdot {\cred \widehat{p}^{red}_{t-1}} \cdot {\cblue \widehat{p}^{blue}_t}
- ({\cred 4}\ {\cgreen -\ 6}) \cdot {\cred \widehat{p}^{red}_{t-1}} \cdot {\cblue \widehat{p}^{blue}_{t-1}}
= 2 \cdot {\cred \widehat{p}^{red}_{t-1}} \cdot \big({\cblue \widehat{p}^{blue}_{t-1}} - {\cblue \widehat{p}^{blue}_t}\big) \label{eq:gb} \\
{\cred \gamma^{red}_t} &=
({\cblue 4}\ {\cgreen -\ 6}) \cdot {\cblue \widehat{p}^{blue}_{t-1}} \cdot {\cred \widehat{p}^{red}_t}
- ({\cblue 4}\ {\cgreen -\ 6}) \cdot {\cblue \widehat{p}^{blue}_{t-1}} \cdot {\cred \widehat{p}^{red}_{t-1}}
= 2 \cdot {\cblue \widehat{p}^{blue}_{t-1}} \cdot \big({\cred \widehat{p}^{red}_{t-1}} - {\cred \widehat{p}^{red}_t}\big) \label{eq:gr}
\end{alignat}
paid equally by the $n-1$ other agents.
This formula is coming from \eqref{gammadef} (or see in \cite{AtSe}, page 2473, equation (5)) in the following way. If both agents will have HIGH type, then the public decision will be YES, and the total utility of the agents excluding {\cblue Blue} or {\cred Red} will be $4 + (-6) = -2$. It will happen with probability ${\cblue p^{blue}} \cdot {\cred p^{red}}$.

From \eqref{eq:gb} and \eqref{eq:gr} and $\E(p^i_{t+1} \where p^i_t) = p^i_t$, the Reader can easily calculate the values in the table. But the intuition behind the issue is the following.

\eqref{eq:gb} can be interpreted as ${\cblue \widehat{p}^{blue}}$ is a quantity of a good that {\cblue Blue} can buy and sell for the changing unit price of $2 \cdot {\cred \widehat{p}^{red}_{t-1}}$ in round $t$.
As a consequence, if an agent expects this unit price to increase (or decrease), then he has an incentive to buy (or sell, respectively) as much as he can.
Specifically, the truthful strategy of {\cblue Blue} is weakly dominated by its modification that if ${\cred \widehat{p}^{red}_{k-1}} = 0$, then he always reports ${\cblue \widehat{p}^{blue}_{k}} = 1$.

First, we show an intuitive reason why the truthful strategy profile is not rational.
If in every odd round both agents report ${\cblue \widehat{p}^{blue}_{2t+1}} = {\cred \widehat{p}^{red}_{2t+1}} = 1$, and in every even round both report ${\cblue \widehat{p}^{blue}_{2t}} = {\cred \widehat{p}^{red}_{2t}} = 0$, then in every even round both get a transfer of $2 \cdot 1 \cdot (1-0) = 2$, and in every odd round both get a transfer $2 \cdot 0 \cdot (0-1) = 0$.
So this deviation from the truthful strategies is very beneficial for both agents.

This deviation does not require coordination: if either agent starts playing it, then the other agent is incentivized to join.
For example, if {\cblue Blue} reports probability ${\cblue \widehat{p}^{blue}_t} = 0$, then it means that the good is free for {\cred Red}, so {\cred Red} should buy as much as he can. Therefore, {\cblue Blue} expects his unit price to increase, as well, so he also buys as much as he can. Namely, both agents report probability ${\cblue \widehat{p}^{blue}_{t+1}} = {\cred \widehat{p}^{red}_{t+1}} = 1$.



We can modify the example so that the agents never report probability 0 before the last round. Namely, if the rules of the game include that, say, ${\cblue p^{blue}_{1}}, {\cred p^{red}_{1}} \ge 0.1$ possibly with equality, then the same argument holds with $\widehat{p}^i_{1} = 0.1$ instead of $\widehat{p}^i_{1} = 0$.

From a more general point of view,

 the fact that the truthful strategy profile is a PBE strongly relies on their belief that the reports of the other agents are the best

In more general, if the model allows an arbitrarily small probability that {\cblue Blue} will have a bit different expectation of {\cred $p^{red}_{t}$} from the latest report ${\cred \widehat{p}^{red}_{t-1}}$, then in every PBE, these two agents will report high and low probabilities alternately, and no PBE will remain close to the truthful strategy profile.

We note that if we replaced ${\cgreen 6}$ to ${\cgreen 3}$, then ${\cblue \widehat{p}^{blue}_{2t}} = 0$, ${\cred \widehat{p}^{red}_{2t}} = 1$, ${\cblue \widehat{p}^{blue}_{2t+1}} = 1$, ${\cred \widehat{p}^{red}_{2t+1}} = 0$ would be a beneficial deviation in the same sense. 



\section{Further weaknesses of the balanced Team Mechanism} \label{sec:weaknesses}

\subsection{Independent types, information} \label{sec:type-info}

The balanced Team Mechanism uses the assumption of \emph{independent types}.
One may fail to realize that this is a much stronger assumption in a dynamic stochastic game than, say, in an auction game.

First of all, notice that ``type'' can mean two different things. We call them ``payoff type'' and ``full type''. Full type is a synonym of information, but payoff type only includes the information of the player that directly affects his payoff. For example, Harsányi described Bayesian games by full types. But in a game with perfect information, the full types of the players do not really make sense because the full types are always the same for every player (except that each player knows his identity in the game). But payoff types may make sense here.
In an auction game, the type of a player typically means payoff type, namely, his own valuation of the good(s).

Independent payoff types is a much more reasonable assumption than independent full types.
As we show in Appendix~\ref{sec:type}, the Athey--Segal paper uses neither payoff type nor full type. The only understanding we found that makes the results formally correct is the following. Type means full type excluding the public information. Or with an equivalent interpretation, type means payoff type, but they had a hidden assumption that the information of each agent consists only of his payoff type and the public information.

Whichever interpretation we accept, this is still a very strong and practically unrealistic assumption. For example, this assumption implies that an agent cannot reveal anything about the history of his \emph{past types}.
We can see in Section~\ref{sec:wrong-ex} that the mechanism has no tolerance for any relaxation of this assumption. Appendix~\ref{sec:simple-example} can help further understand the nature of this issue.

\subsection{Dependence on the initial types}

The balanced Athey--Segal mechanism is dependent on the initial types of the agents $\theta^N_0$. This assumes that the initial types are publicly known and verifiable by a court.
We can interpret this assumption in different ways, but all of them look practically unreasonable, especially in contrast to the assumption of independent types.

For example, if we are interpreting $\theta^i_0$ as  common knowledge about the probability distribution of the true type $\theta^i_1$, then it means that the designer can exactly define the common prior and no other agent can possibly have any further knowledge about the prior distribution of the types of others.
Based on Example~\ref{general-example} (in Section~\ref{sec:wrong-ex}), we can see that if an agent has an arbitrary small correlated information about the types of others, then there exists no PBE close in any sense to the truthful strategy profile.

We note that the unbalanced Team Mechanism is not dependent of the initial types of the agents $\theta^N_0$, but this difference between the two mechanisms was not pointed out in the paper.


\section{How to fix the balanced Team Mechanism} \label{sec:fix}


To fix the Athey--Segal mechanism, we need to separate payoff type $\theta^i_t$ from information (full type) $\phi^i_t$ and $\phi^i_{t+}$, where $\phi^i_t$ is the information of the agent when he reports $\widehat{\theta}^i_t$, and $\phi^i_{t+}$ is the information of the agent just before the next-round state $\theta^i_{t+1}$ is chosen by nature.

We define {\bf momentarily private payoff types} as follows. Each payoff type $\theta^i_t$ is independent of the joint information (full type) of the other players in the same round $\phi^{N \setminus \{i\}}_t$ (and everything before round $t$), conditional on $x_{t-1}$ and $\theta^i_{t-1}$.
We can fix the mechanism for momentarily private payoff types.
(In fact, we only need an even weaker assumption that each agent $i$ can keep his payoff type $\theta^i_t$ momentarily secret if he wants to.)

In the balanced Team Mechanism, the designer calculates $\gamma^i_t$ as a marginal contribution of $\theta^i_t$ given the reported types of the agents in the previous round $\widehat{\theta}^N_{t-1}$.
But the same proof would work if it was calculated given the same-round reports of others
$(\widehat{\theta}^{N \setminus \{i\}}_t, \widehat{\theta}^i_{t-1})$.
Or we could use any combination of the two rules.

We recommend one combination, which is coming from the quasi-dominant equilibrium implementation of a more complex model in \cite{CsE} (a tendering model with arbitrary private initial types of the agents), specified for our case. Namely, we update the types one by one. Say, we calculate the change $\gamma^i_t$ given the reported types $(\widehat{\theta}^1_t, \widehat{\theta}^2_t, ...,$ $\widehat{\theta}^{i-1}_t,$ $\widehat{\theta}^i_{t-1}, \widehat{\theta}^{i+1}_{t-1}, ..., \widehat{\theta}^n_{t-1})$. Formally, \eqref{gammadef} is replaced by the following.
\begin{equation*} 
    \gamma^i_t = \sum_{j \in N \setminus\{i\}} \sum_{t' = t}^T \E \Big(
    \upsilon_{t'}^{j, \xi}(\widehat{\theta}^1_t, \widehat{\theta}^2_t, ..., \widehat{\theta}^{i-1}_t, \widehat{\theta}^i_t, \widehat{\theta}^{i+1}_{t-1}, ..., \widehat{\theta}^n_{t-1})
    - \upsilon_{t'}^{j, \xi}(\widehat{\theta}^1_t, \widehat{\theta}^2_t, ..., \widehat{\theta}^{i-1}_t, \widehat{\theta}^i_{t-1}, \widehat{\theta}^{i+1}_{t-1}, ..., \widehat{\theta}^n_{t-1}) \Big)
\end{equation*}

We also change the rule of how $\gamma^i_t$ is paid by the other agents. Instead of sharing it equally between the other agents, we say that whoever gets affected positively (or negatively) by a new report pays (or gets) the same transfer.
For example, if {\cblue Blue} makes a new report, and it changes the expected total utility of {\cred Red} by {\cred $\$15$}, and of {\cgreen Green} by $-\$5$, then ${\cblue Blue}$ gets ${\cred \$15} - {\cgreen \$5} = {\cblue \$10}$.
With the original transfer rule, each of the $n-1$ other agents pays $\frac{\cblue \$10}{n-1}$. With our new rule, {\cred Red} pays {\cred $\$15$} and {\cgreen Green} gets {\cgreen $\$5$}, and the others pay 0. Formally, the payment to $i$ is the following.
\begin{equation*} 
    y^i_t = \sum_{j \in N \setminus\{i\}} \sum_{t' = t}^T \E \Big(
    \upsilon_{t'}^{j, \xi}(\widehat{\theta}^1_t, \widehat{\theta}^2_t, ..., \widehat{\theta}^{i-1}_t, \widehat{\theta}^i_t, \widehat{\theta}^{i+1}_{t-1}, ..., \widehat{\theta}^n_{t-1})
    - \upsilon_{t'}^{j, \xi}(\widehat{\theta}^1_t, \widehat{\theta}^2_t, ..., \widehat{\theta}^{i-1}_t, \widehat{\theta}^i_{t-1}, \widehat{\theta}^{i+1}_{t-1}, ..., \widehat{\theta}^n_{t-1}) \Big)
\end{equation*}
\begin{equation*} 
    - \sum_{j \in N \setminus\{i\}} \sum_{t' = t}^T \E \Big(
    \upsilon_{t'}^{i, \xi}(\widehat{\theta}^1_t, \widehat{\theta}^2_t, ..., \widehat{\theta}^{j-1}_t, \widehat{\theta}^j_t, \widehat{\theta}^{j+1}_{t-1}, ..., \widehat{\theta}^n_{t-1})
    - \upsilon_{t'}^{i, \xi}(\widehat{\theta}^1_t, \widehat{\theta}^2_t, ..., \widehat{\theta}^{j-1}_t, \widehat{\theta}^j_{t-1}, \widehat{\theta}^{j+1}_{t-1}, ..., \widehat{\theta}^n_{t-1}) \Big)
\end{equation*}

Under this mechanism, the truthful strategy profile is an {\bf efficient Nash equilibrium with guaranteed expected payoffs}. It means the following (with the payoff functions $F_i$).
\begin{itemize}
    \item The truthful strategy $s^*_i$ of every agent $i$ guarantees an expected payoff of at least $C_i$ no matter which strategies the other agents choose. Namely, $\forall s_{N \setminus i}\: \E\big(F_i(s^*_i, s_{N\setminus i})\big) \ge C_i$;
    \item  These guaranteed expected payoffs $C_i$ sum up to the total expected payoffs of all agents with the efficient strategy profile. Namely, $\sum\limits_{i \in N} C_i = \sup\limits_{s_N} \sum\limits_{i \in N} \E\big(F_i(s_N)\big)$.
\end{itemize}
Notice that it implies collusion-resistance, as well.
All these are also true in a model with private decisions. \cite{CsE}

We note that for 2 players and no same-round chance events, the balanced Team Mechanism coincides with this new mechanism. This is the reason why the example in the Athey--Segal paper (Section~3 in \cite{AtSe}) works perfectly well.

For the (degenerate) case of same-time chance events, the mechanism depends on the ordering of the indexes of the agents. But any convex combination of these mechanisms satisfy the same properties. Therefore, if we want a symmetric mechanism for the agents, then we can average the payment rules for all permutations of the agents. In other words, $\gamma^i_t$ is the Shapley contribution of the report $\widehat{\theta}^i_t$ among the reports $\widehat{\theta}^N_t$ to the change in the expected total utility of others (by trustful calculation). For Example~\ref{general-example}, this means
\begin{alignat*}{2}
{\cblue \gamma^{blue}_t} &&= 100 \cdot \frac{{\cred \widehat{p}^{red}_{t-1}} + {\cred \widehat{p}^{red}_t}}{2} \big({\cblue \widehat{p}^{blue}_{t-1}} - {\cblue \widehat{p}^{blue}_t}\big), \\
{\cred \gamma^{red}_t} &&= 100 \cdot \frac{{\cblue \widehat{p}^{blue}_{t-1}} + {\cblue \widehat{p}^{blue}_t}}{2} \big({\cred \widehat{p}^{red}_{t-1}} - {\cred \widehat{p}^{red}_t}\big).
\end{alignat*}
If we use this new mechanism, then it resolves the problems shown about Example~\ref{general-example}. If the two agents play ${\cblue \widehat{p}^{blue}_{2t+1}} = {\cred \widehat{p}^{red}_{2t+1}} = 1$ and ${\cblue \widehat{p}^{blue}_{2t}} = {\cred \widehat{p}^{red}_{2t}} = 0$, then they buy and sell the good for always the same unit price of $100 \cdot \frac{0+1}{2} = 100 \cdot \frac{1+0}{2} = 50$.

Note that if we use a continuous-time model instead of the round-by-round model, then for the calculation of $\gamma^i_t$, we can use the order of the receiving times of the reports. And hereby same-time reports do not normally happen. 
In a continuous-time model, momentarily private payoff types only mean that for each update of a payoff type of an agent, this agent can be the first to report about it, and nobody else can observe and report any correlated information any faster.


We have only one weakness that we did not resolve: the initial types are still fixed.
This issue cannot be fixed as nicely as the other problems, but this is possible to handle quite well, due to the property of guaranteed expected payoffs.
Namely, this mechanism gives an ``almost reduction'' of the dynamic stochastic problem to a single-round problem.
For example in \cite{CsE}, we are discussing a situation where a principal wants to choose some of the competing agents for a dynamic stochastic multi-agent working process, with no prior assumption about the types of the agents.

\section{The more general model and the stronger results} \label{sec:myresult}

We show the model and mechanism in Section~\ref{sec:fix} not as a comparison to the Athey--Segal paper, but purely alone. Namely, we show the more general model but still with fixed initial types, and how the direct mechanism in \cite{CsE} applies here.
In Appendix~\ref{app:non-qlinear}, we will show how the indirect mechanism in \cite{CsE} can be applied for this setup, and how it extends nicely to the case of nonquasilinear payoffs.

\subsection{General notation}

Each agent $i$ at each time point $m$ has a {\bf payoff type} $\theta^i_m \in \Theta$ and {\bf information} (or full type) $\phi_m^i \in \mathcal{I}$.
{\bf Strategy} is a mapping from information to actions.
We always assume that the information $\phi_m^i \in \mathcal{I}$ of each agent $i$ always includes his earlier information $\phi_{m-1}^i \in \mathcal{I}$, his payoff type $\theta^i_m$ and the history of ``public'' actions (or decisions).\footnote{Formally, for example, $\phi_m^i$ includes $\theta^i_m$ means that there exists a public function $\tau: \mathcal{I} \to \Theta$ such that if agent $i$ has information $\phi_m^i$, then his payoff type must be $\theta^i_m = \tau(\phi_m^i)$.}

In contrast to the standards in theoretical economics, if we do not specify something in the model, then it will mean that it is not specified. In other words, any setup which satisfies the specifications belong to the set of setups we are discussing.

\subsection{The extended model} \label{simple-sec}

We have a set $N = \{1, 2, ..., n\}$ of agents. There is a finite number of rounds $T = \{1, 2, \dots k\}$, each round consists of 4 steps (subrounds). $\Theta$ denotes the finite set of possible {\bf payoff types}. The initial payoff types $\theta^{N_0}_0 \in \Theta^{N_0}$ are fixed, where $N_0 = N \cup \{0\}$ and $\theta^0_t$ is a public type in round $t$.

\smallskip
\noindent
Each round $t \in T$ consists of the following steps.
\begin{itemize}
    \item The planner makes a public decision $x^0_t \in X$.
    \item Each agent $i \in N$ makes a decision $x^i_t \in X$.
    \item For every agent $i \in N_0$, Nature chooses $\theta^i_t \in \Theta$ from a probability distribution $\mu(\theta^i_{t-1}, \theta^0_{t-1}, x^0_t, x^i_t)$ (where $\mu \: \Theta^2 \times X^2 \rightarrow \Delta(\Theta)$ is a given public function) independently from each other and from the state of the game,\footnote{As a further technical assumption, if nature makes any move between this step and the next step in round $t$, then its distribution must be independent of $\theta^N_t$.} conditional on these probability distributions.
    \item Each agent $i \in N$ sends a report $\widehat{\theta}^i_t$.
\end{itemize}

There might be further actions by the agents or nature (e.g.\ cheap talk, signaling, type revelation) that may affect the information of the agents (but not their payoff types).

\smallskip
The information of the planner includes the initial payoff types $\theta^{N_0}_0$ and the history of reports and public states $\widehat{\theta}^{N_0}_T$ (up to the current time point).
At the end, the planner determines transfers $y^i$ to agent $i \in N$ with $\sum\limits_{i \in N} y^i = 0$.

The utility of each agent $i \in N$ is $u_i = v(\theta^i_{k}) + y^i$ (for a given $v \: \Theta \rightarrow \R$).

\medskip
{\bf  Assumption of momentarily private types.}
Denote by $\phi^i_t$ the information of agent $i$ when he reports $\widehat{\theta}^i_t$.
For every $i \in N$, $\theta^i_t$ must be independent of $\phi^{N-i}_t$ (and everything before round $t$) conditional on $\theta^i_{t-1}$.
(\emph{Weaker assumption.} If $i$ is using the truthful strategy, then $\theta^i_t$ must be independent of $(\phi^{N-i}_t, \phi^i_{t-1})$ conditional on $\theta^i_{t-1}$.)\footnote{We need to assume that the agent is able to keep his chance event his private information until he reports it. But the point of the weaker version is that we do not need to assume that he is not able to share this information.}

We note that in a continuous-time setup, this assumption only means that if an agent reports every chance event (changes in his type) as soon as he can, then the other agents cannot get to know any stochastic information which is not already reported.

\subsection{The mechanism} \label{sec:my-mechanism}

We show a direct mechanism, namely, we always ask the agents to report their types, and we make recommendations for their private decisions.
For calculations, the designer will be trustful in the sense that expected utilities will be calculated as if the agents had the same types as they reported, they would always report the truth and they would always make the hidden decision as recommended.
We fix an efficient decision policy $\xi: \{1, 2, ..., T\} \times \Theta^{N_0} \to X \times X^N$.
In round $t$, we learn the new types one by one, so $\widehat{\theta}_{t,i}^N = (\widehat{\theta}^1_t, \widehat{\theta}^2_t, ..., \widehat{\theta}^{j-1}_t, \widehat{\theta}^j_t, \widehat{\theta}^{j+1}_{t-1}, ..., \widehat{\theta}^n_{t-1})$ is our information about the reported types in round $t$ after learning the types of agents 1, 2, $...,\ j$.
Let $\Upsilon^i_{t, j}(\theta^N)$ denote the expected total utility of agent $i$ if $\widehat{\theta}_{t,j}^N = \theta^N$ (and with the trustful assumption). Then for each round $t \in \{1, 2, ..., T\}$ and each pair of different agents $(i, j) \in N \times N$, $j$ pays to $i$ a signed transfer of $\Upsilon^j_{t, i}(\theta^N) - \Upsilon^j_{t, i-1}(\theta^N)$.

\begin{Note}
We could ask the agents to report their types at the same time and average the evaluations with every permutation of the agents.
But if we use a (more realistic) continuous-time model, then the reports typically arrive one by one anyway.

In Appendix~\ref{app:non-qlinear}, we will sketch the more general non-revelation mechanism in \cite{CsE} which works importantly differently if the payoffs of the agents are not quasilinear.

\end{Note}

\subsection{The equilibrium concept and the proof}

As $u$ denoted utility in the other sections, in order to avoid ambiguity, let $F$ denote payoff.

\begin{Def}
In a stochastic dynamic game with a set of players $N$, suppose that there is a strategy profile $s_*^N \in \Str^N$ and constants $C^i$ satisfying the following.
\begin{align}
\forall i \in N,\  \forall s^{N-i} \in \Str^{N-i}\:&&
\E\big(F^i(s_*^i, s^{N-i})\big) &\ge C^i \label{pi2}
&&\phantom{\forall i \in N,\  \forall s^{N-i} \in \Str_{N-i}\:}
\\ &&
\sup_{s^N \in \Str^N} \sum_{i \in N} \E\big(F^i(s^N)\big) &= \sum_{i \in N} C^i \label{pN2}
&&\phantom{\forall s^N \in \Str^N\:}
\end{align}
Then $s_*^N$ is an \textbf{efficient Nash equilibrium with guaranteed expected payoffs}.
\end{Def}

\begin{proof}[Justification.]
\eqref{pi2} means that $i$ can guarantee himself an expected payoff $C^i$ by playing $s_*^i$. Each player $i \in N$ has no hope of getting more expected payoff than the maximum possible total expected payoff of all players minus the sum of the guaranteed expected payoffs of the other players. Therefore, $i$ has no hope of getting more expected payoff than
\begin{equation*}
\sup_{s^N \in \Str^N} \sum_{j \in N} \E\big(F^{j}(s_N)\big) - \sum_{j \in N \setminus \{i\}} \E\big(F^j(s_*^N)\big) \mathop{\le}^{\eqref{pN2}} \sum_{j \in N}
C^j - \sum_{j \in N \setminus \{i\}} C^j = C^i.
\end{equation*}
Consequently, each player $i \in N$ has no incentive to deviate from $s_*^i$, and therefore, we can rightfully say that $s_*^N$ is an equilibrium.

The same argument holds for coalitions: for any $X \subset N$, no joint strategy profile can provide them a higher total expected payoff than
\begin{equation*}
\sup_{s^N \in \Str^N} \sum_{j \in N} \E\big(F^{j}(s^N)\big) - \sum_{j \in N \setminus X} \E\big(F^j(s_*^N)\big) \mathop{\le}^{\eqref{pN2}} \sum_{j \in N}
C^j - \sum_{j \in N \setminus X} C^j = \sum_{i \in X} C^i. \qedhere
\end{equation*}
\end{proof}

\begin{Theorem}
The truthful strategy profile is an efficient Nash equilibrium with guaranteed expected payoffs, with the truthful strategy profile $s_*^N$ and $C^i = \Upsilon^i_{0, 0}$.
\end{Theorem}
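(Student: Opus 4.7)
My plan is to verify the two defining conditions \eqref{pi2} and \eqref{pN2} separately. Condition \eqref{pN2} follows essentially from budget balance and the definition of $\xi$: since $\sum_{i \in N} y^i = 0$, we have $\sum_{i\in N} F^i = \sum_{i\in N} v(\theta^i_T)$ for every strategy profile, so the supremum on the left of \eqref{pN2} equals the supremum of the expected total utility over all joint strategies; this supremum is attained by $\xi$ with truthful reporting (the definition of an efficient decision policy), and by the definition of $\Upsilon^i_{0,0}$ as the trustful expected total utility of $i$, the attained value is $\sum_{i\in N}\Upsilon^i_{0,0}$. This half of the proof takes only a few lines.

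The substantive work is \eqref{pi2}. I will actually prove the stronger claim that $\E\bigl(F^i(s_*^i, s^{N-i})\bigr) = \Upsilon^i_{0,0}$ for every $s^{N-i}$. Write $R_{t,j} = (\widehat{\theta}^1_t, \ldots, \widehat{\theta}^j_t, \widehat{\theta}^{j+1}_{t-1}, \ldots, \widehat{\theta}^n_{t-1})$ for the report profile after agent $j$ has been updated in round $t$, and set $\Sigma_{t,j} = \sum_{k\in N} \Upsilon^k_{t,j}$ for the trustful expected total welfare. Summing the mechanism's signed transfers over rounds and using both $\sum_{j\neq i}\Upsilon^j = \Sigma - \Upsilon^i$ and the double telescoping $\sum_t \sum_{k=1}^n \bigl[\Upsilon^i_{t,k}(R_{t,k}) - \Upsilon^i_{t,k-1}(R_{t,k-1})\bigr] = \Upsilon^i_{T,n}(R_{T,n}) - \Upsilon^i_{1,0}(R_{1,0})$ (which uses the obvious identity $R_{t,n} = R_{t+1,0}$ to chain rounds together) yields
\[
y^i \;=\; \sum_{t=1}^T \bigl[\Sigma_{t,i}(R_{t,i}) - \Sigma_{t,i-1}(R_{t,i-1})\bigr] \;-\; \bigl[\Upsilon^i_{T,n}(R_{T,n}) - \Upsilon^i_{0,0}\bigr].
\]
Under $s_*^i$ the truthfulness of $i$ gives $\widehat{\theta}^i_T = \theta^i_T$, and no uncertainty remains after the last report, so $\Upsilon^i_{T,n}(R_{T,n}) = v(\widehat{\theta}^i_T) = v(\theta^i_T)$. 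Adding $v(\theta^i_T)$ collapses the boundary terms:
\[
F^i \;=\; \Upsilon^i_{0,0} \;+\; \sum_{t=1}^T \bigl[\Sigma_{t,i}(R_{t,i}) - \Sigma_{t,i-1}(R_{t,i-1})\bigr].
\]

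It remains to show that each bracketed increment has zero conditional expectation given the full reality immediately before $i$'s round-$t$ report. This is a Bellman identity for the trustful value function: $\Sigma_{t,i-1}(R) = \E_{\widehat{\theta}^i_t \sim \mu^*}\bigl[\Sigma_{t,i}(R \cup \{\widehat{\theta}^i_t\})\bigr]$, where $\mu^* = \mu(\widehat{\theta}^i_{t-1}, \widehat{\theta}^0_{t-1}, x^0_t, x^i_t)$ is the transition kernel plugged into the trustful calculation. The main obstacle, and the only place where the new assumption enters nontrivially, is verifying that the \emph{actual} conditional distribution of $\widehat{\theta}^i_t = \theta^i_t$ coincides with this $\mu^*$ even under arbitrary deviations by the other agents. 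The momentarily private payoff types assumption provides this: the kernel $\mu$ depends only on $\theta^i_{t-1}, \theta^0_{t-1}, x^0_t, x^i_t$, and all four inputs agree in reality and in the trustful simulation because $i$ is truthful (so $\theta^i_{t-1} = \widehat{\theta}^i_{t-1}$), $\theta^0_{t-1}$ is public, $i$ follows the recommendation for his own decision, and the planner applies the same $\xi$ to the same reports on both sides for $x^0_t$; crucially, $\mu$ does not depend on $x^j_t$ for $j \in N \setminus \{0,i\}$, so other agents' action deviations are irrelevant. Taking expectations and using the tower property gives $\E(F^i) = \Upsilon^i_{0,0}$, which is strictly stronger than the inequality required by \eqref{pi2} and completes the proof.
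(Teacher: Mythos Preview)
Your proof is correct and follows essentially the same approach as the paper's: the paper phrases it as ``$\Upsilon^i$ plus accumulated transfers is a martingale when $i$ is truthful,'' which unpacked is exactly your telescoping computation $F^i = \Upsilon^i_{0,0} + \sum_t\bigl[\Sigma_{t,i}-\Sigma_{t,i-1}\bigr]$ together with the Bellman/zero-mean step for each increment. Your treatment is considerably more explicit (in particular you spell out why the actual law of $\theta^i_t$ matches the trustful kernel and where the momentarily-private-types assumption enters), but the underlying idea and the order of the argument are the same.
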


\begin{proof}

For \eqref{pi2}, it is enough to prove that $\Upsilon^i$ plus the total transfers is a martingale if $i$ is truthful because this sum is $C^i$ at the beginning and $u_i(s_*^i, s^{N-i})$ at the end.
The martingale property holds when the agent receives his next-round type or the public type is updated because of the martingale property of the expected value. And whenever the other agent $j$ changes $\Upsilon^i$, he pays to $i$ the signed difference, so the sum is invariant here.

\eqref{pN2} holds because the truthful strategy profile maximizes the expected total payoff.
\end{proof}

\section{Acknowledgement}

I would like to thank Ilya Segal, Larry Samuelson, Johannes Hörner, and Alexander Rodivilov for their help in better understanding and presenting the results.

\bibliography{AS-wrong}

\begin{thebibliography}{1}

\bibitem{arrow1979property}
Kenneth~J Arrow.
\newblock The property rights doctrine and demand revelation under incomplete
  information.
\newblock In {\em Economics and Human Welfare}, pages 23--39. Elsevier, 1979.

\bibitem{AtSe-old}
Susan Athey and Ilya Segal.
\newblock An efficient dynamic mechanism.
\newblock manuscript, 2004-2007,
  \url{http://www.cs.cmu.edu/~sandholm/cs15-892F13/EfficientDynamic.Athey\%20and\%20Segal\%2007.pdf},
  2004-2007.

\bibitem{AtSe}
Susan Athey and Ilya Segal.
\newblock An efficient dynamic mechanism.
\newblock {\em Econometrica}, 81(6):2463--2485, 2013.

\bibitem{BeVa}
D.~Bergemann and J.~V{\"a}lim{\"a}ki.
\newblock The dynamic pivot mechanism.
\newblock {\em Econometrica}, 78(2):771--789, 2010.

\bibitem{Clarke}
E.H. Clarke.
\newblock Multipart pricing of public goods.
\newblock {\em Public Choice}, 11(1):17--33, 1971.

\bibitem{CsE}
Endre Cs{\'o}ka.
\newblock Efficient teamwork.
\newblock {\em manuscript, Conference on Economic Design, 2017, York; The 9th
  annual conference of the Israeli chapter of the Game Theory Society, 2017,
  Haifa; International Conference on Game Theory, Stony Brook, USA, 2015, arXiv
  preprint cs/0602009}, 2006 (first version).

\bibitem{d1979incentives}
Claude d'Aspremont and Louis-Andr{\'e} G{\'e}rard-Varet.
\newblock Incentives and incomplete information.
\newblock {\em Journal of Public Economics}, 11(1):25--45, 1979.

\bibitem{Groves}
T.~Groves.
\newblock Incentives in teams.
\newblock {\em Econometrica: Journal of the Econometric Society}, pages
  617--631, 1973.

\bibitem{Vickrey}
W.~Vickrey.
\newblock Counterspeculation, auctions, and competitive sealed tenders.
\newblock {\em The Journal of Finance}, 16(1):8--37, 1961.

\end{thebibliography}

\newpage
\begin{center}
{\huge Appendix}
\end{center}
\bigskip
\appendix

\section{A more generic counterexample} \label{app:details}

We present a longer but more convincing example to show that the issues with the Athey--Segal balanced Team Mechanism were not caused by the degenerations of our previous examples.

In this section, we will specify and slightly modify Example~\ref{general-example}.
Our analysis will be based on Lemma~\ref{thm:parity}.
In order to get convenient values, we replace the utilities of $(\textnormal{\cblue Blue}, \textnormal{\cred Red}, \textnormal{\cgreen Green})$ to $({\cblue 84\textnormal{ or }104}, {\cred 84\textnormal{ or }104}, {\cgreen -204})$, and hereby the constant factor $6 - 4 = 2$ is replaced by $204 - 104 = 100$ in \eqref{eq:gb} and \eqref{eq:gr}.
We use the notation $\delta_t^i = \widehat{p}^i_t - p^i_t$.

\begin{Lemma} \label{thm:parity}
Assume that ${\cblue \widehat{p}^{blue}_t} = {\cblue p^{blue}_t}$ in every even round $t$, and ${\cred \widehat{p}^{red}_t} = {\cred p^{red}_t}$ in every odd round $t$, and both are true in the first and last rounds $t = 0$ and $t = k$. Then the followings hold.
\begin{equation}
    \E ( {\cblue \gamma^{blue}} )
    = \E \Big( \sum_{t=1}^k {\cblue \gamma^{blue}_t} \Big)
    = 100 \cdot \sum_{t=1}^k \E \big(-{\cred \delta^{red}_{t-1}} \cdot {\cblue \delta^{blue}_{t}} \big)
    = 100 \cdot \sum_{t=1}^{ \lfloor \frac{k-2}{2} \rfloor} \E \big(-{\cred \delta^{red}_{2t}} \cdot {\cblue \delta^{blue}_{2t+1}} \big) \label{eq:Egb}
\end{equation}
\begin{equation}
    \E ( {\cred \gamma^{red}} )
    = \E \Big(\sum_{t=1}^k {\cred \gamma^{red}_t} \Big)
    = 100 \cdot \sum_{t=1}^k \E \big(-{\cblue \delta^{blue}_{t-1}} \cdot {\cred \delta^{red}_{t}}\big)
    = 100 \cdot \sum_{t=1}^{\lfloor \frac{k-1}{2} \rfloor} \E \big(-{\cblue \delta^{blue}_{2t-1}} \cdot {\cred \delta^{red}_{2t}}\big)
    \label{eq:Egr}
\end{equation}
\end{Lemma}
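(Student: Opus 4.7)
The plan is to expand $\widehat{p}^i_s = p^i_s + \delta^i_s$ in each of \eqref{eq:gb} and \eqref{eq:gr} (with the rescaled constant $100$), producing four cross terms in each $\gamma$. Three of the four will have total expectation zero, and the fourth will reduce, after one round of Abel summation and the parity hypothesis, to the stated right-hand side. Throughout let $\mathcal{F}_s$ denote the joint information filtration through round $s$. Because the two agents' type-update chains are independent and each $p^i_s$ is a function of $\theta^i_s$ alone, each process $p^i_{\cdot}$ is a martingale with respect to $\mathcal{F}_{\cdot}$, i.e.\ $\E({\cblue p^{blue}_t}-{\cblue p^{blue}_{t-1}}\mid\mathcal{F}_{t-1})=0$ and symmetrically for red.

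In the expansion of ${\cblue \gamma^{blue}_t}$, the two cross terms containing the factor $({\cblue p^{blue}_{t-1}}-{\cblue p^{blue}_t})$ vanish termwise in expectation, since their companion factors ${\cred p^{red}_{t-1}}$ and ${\cred \delta^{red}_{t-1}}$ are both $\mathcal{F}_{t-1}$-measurable. For the mixed term $\sum_t \E\big({\cred p^{red}_{t-1}}({\cblue \delta^{blue}_{t-1}}-{\cblue \delta^{blue}_t})\big)$, I would apply Abel summation together with the boundary conditions ${\cblue \delta^{blue}_0}={\cblue \delta^{blue}_k}=0$ to rewrite it as $\sum_{s=1}^{k-1}\E\big({\cblue \delta^{blue}_s}({\cred p^{red}_s}-{\cred p^{red}_{s-1}})\big)$. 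Since ${\cblue \delta^{blue}_s}$ is measurable with respect to $\mathcal{F}_{s-1}\vee\sigma({\cblue \theta^{blue}_s})$ and ${\cred \theta^{red}_s}$ is drawn from $\mu(\cdot,{\cred \theta^{red}_{s-1}})$ independently of ${\cblue \theta^{blue}_s}$, the conditional expectation $\E({\cred p^{red}_s}-{\cred p^{red}_{s-1}}\mid\mathcal{F}_{s-1},{\cblue \theta^{blue}_s})$ is still zero, so this sum vanishes as well.

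The surviving cross term $\sum_t \E\big({\cred \delta^{red}_{t-1}}({\cblue \delta^{blue}_{t-1}}-{\cblue \delta^{blue}_t})\big)$ rearranges by the same Abel manipulation into
\[
\sum_{s=1}^{k-1}\E({\cred \delta^{red}_s}\,{\cblue \delta^{blue}_s}) \;-\; \sum_{t=1}^{k-1}\E({\cred \delta^{red}_{t-1}}\,{\cblue \delta^{blue}_t}).
\]
The diagonal sum vanishes termwise because the parity hypothesis forces ${\cblue \delta^{blue}_s}=0$ on even $s$ and ${\cred \delta^{red}_s}=0$ on odd $s$. Reinstating the boundary terms $t=1$ and $t=k$ (each zero since ${\cred \delta^{red}_0}={\cblue \delta^{blue}_k}=0$) extends the remaining sum to $1\le t\le k$, which is precisely the middle equality of \eqref{eq:Egb}. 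The last equality then follows by observing that ${\cred \delta^{red}_{t-1}}\,{\cblue \delta^{blue}_t}$ can be nonzero only when $t-1$ is even and $t$ is odd, and reindexing $t=2s+1$ under the boundary constraints $t-1\ge 1$ and $t\le k-1$. The derivation of \eqref{eq:Egr} is entirely symmetric with the roles of blue and red swapped, forcing even $t$ and yielding the range $1\le s\le\lfloor(k-1)/2\rfloor$.

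The main obstacle is the enriched-filtration conditional-expectation identity used in the Abel step: one must upgrade the single-agent martingale property from its native $\sigma({\cred \theta^{red}_{s-1}})$ all the way to $\mathcal{F}_{s-1}\vee\sigma({\cblue \theta^{blue}_s})$, and this is exactly where the model's ``independent randomizations'' assumption is invoked. Once this identity is in place, the remaining parity cancellations, boundary extensions, and re-indexings are routine.
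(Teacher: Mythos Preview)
Your proof is correct and takes essentially the same approach as the paper: expand $\widehat p = p + \delta$, use the martingale property together with the independence of the two type-update chains to kill the mixed cross terms, Abel-rearrange the surviving $\delta\cdot\delta$ sum, and then invoke the parity hypothesis to collapse the diagonal. The only presentational difference is that the paper routes the computation through the intermediate Lemma~\ref{thm:general}, which does not yet impose $\delta_0=\delta_k=0$ and therefore carries an explicit boundary term $p^{red}_0 p^{blue}_0 - p^{red}_k \widehat p^{blue}_k$ before specializing, whereas you use the boundary conditions from the outset and dispose of the $p^{red}\cdot\delta^{blue}$ term by a second Abel summation.
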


\medskip

\subsection{The setup}

Consider now Example~\ref{general-example} with $k = 4$ number of rounds (and $n \ge 3$ players). We define a finite space of types for both players, the types are encoded by
\begin{center}
[{\cblue b}/{\cred r}: {\cblue Blue} or {\cred Red}][Round number]:[the percentage that his type will be HIGH]\%.
\end{center}
\begin{minipage}{\textwidth}
\begin{multicols}{2}
\phantom{000000000}\emph{Types of agent {\cblue Blue}:} 
\\
\\ \phantom{00000000000000} {\cblue b0:50\%}
\\ \phantom{000000000} {\cblue b1:30\%}, \phantom{00} {\cblue b1:70\%}
\\ \phantom{000000} {\bf{\cblue b2:20\%}, \phantom{000000} {\cblue b2:80\%}}
\\ \phantom{000} {\cblue b3:10\%}, \phantom{00000000000000} {\cblue b3:90\%}
\\ \phantom{} {\bf{\cblue b4:0\%}, \phantom{00000000000000000} {\cblue b4:100\%}}
\\ ({\cblue LOW}) \phantom{000000000000000000000} ({\cblue HIGH})
\columnbreak

\phantom{00000000}\emph{Types of agent {\cred Red}:}
\\
\\ \phantom{00000000000000} {\cred r0:50\%}
\\ \phantom{00000000000000} {\cred r1:50\%}
\\ \phantom{000000} {\cred r2:20\%}, \phantom{00000000} {\cred r2:80\%}
\\ \phantom{000} {\bf{\cred r3:10\%}, \phantom{00000000000} {\cred r3:90\%}}
\\ \phantom{} {\bf{\cred r4:0\%}, \phantom{00000000000000000} {\cred r4:100\%}}
\\ ({\cred LOW}) \phantom{00000000000000000000.} ({\cred HIGH})
\end{multicols}
\end{minipage}

\bigskip

The transition probabilities can be calculated from (the martingale property of) the percentages. E.g., {\cblue b1:30\%} is transitioning to {\cblue b2:20\%} or to {\cblue b2:80\%} with probabilities $5/6$ and $1/6$, respectively, because ${\cblue 30\%} = \frac{5}{6} \cdot {\cblue 20\%} + \frac{1}{6} \cdot {\cblue 80\%}$.


We make some modifications to the setup in order to incentivize the agents (under the balanced Team Mechanism) to tell the truth in specific rounds (marked with {\bf bold}). Hereby, we will be able to focus on the possibilities of deviations only in the rest of the rounds, which will simplify the analysis.

We add 4 extra public decisions, each of them affecting only one agent. In round 2, there is a public decision "{\cblue b2:20\%}" or "{\cblue b2:80\%}". If this decision does not coincide with the true type of {\cblue Blue}, then {\cblue Blue} gets utility $-10^{42}$ (instead of $0$) for that round. We do the analogous modification in rounds {\cblue 2} and {\cblue 4} for {\cblue Blue} and in rounds {\cred 3} and {\cred 4} for {\cred Red} (marked with {\bf bold}).  (Formally, $X_2 = \{ "\text{\cblue b2:20\%}", "\text{\cblue b2:80\%}" \}$, $X_3 = \{ "\text{\cred r3:10\%}", "\text{\cred r3:90\%}" \}$, $X_4 = \{ "\text{\cblue b4:0\%}", "\text{\cblue b4:100\%}" \} \times \{ "\text{\cred r4:0\%}", "\text{\cred r4:100\%}" \} \times \{ "\text{YES}", "\text{NO}" \}$.)

\subsection{The analysis}

Under the balanced Team Mechanism, these modifications in the setup induce no transfers between the agents. The only effect is that the agents get a huge punishment if they do not report truthfully in the specified rounds.
Therefore, by (a dynamic version of) \emph{strict domination}, we eliminate the possibilities of not telling the truth about these 4 extra public decisions (marked with {\bf bold}). It leaves a total of 3 binary decisions for the two players: ${\cblue \widehat{p}^{blue}_1}$, ${\cred \widehat{p}^{red}_2}$ and ${\cblue \widehat{p}^{blue}_3}$.
The agents only observe their own types and these earlier decisions of the other agent.

In this reduced game, the utilities $u(x_t, \theta^i_t)$ are unaffected by these three decisions, and the conditions of Lemma~\ref{thm:parity} apply. Therefore,
\begin{itemize}
    \item {\cblue Blue} is maximizing
    $\E \big( {\cblue \gamma^{blue}}\big) - \frac{1}{n-1} \E \big({\cred \gamma^{red}}\big) = 100 \cdot \E \big(-{\cred \delta^{red}_2} \cdot {\cblue \delta^{blue}_3}\big) - \frac{100}{n-1} \E \big(-{\cblue \delta^{blue}_1} \cdot {\cred \delta^{red}_2}\big)$,
    \item {\cred Red} is maximizing $\E \big({\cred \gamma^{red}}\big) - \frac{1}{n-1} \E \big( {\cblue \gamma^{blue}}\big) = 100 \cdot \E \big(-{\cblue \delta^{blue}_1} \cdot {\cred \delta^{red}_2}\big) - \frac{100}{n-1} \E \big(-{\cred \delta^{red}_2} \cdot {\cblue \delta^{blue}_3} \big)$.
\end{itemize}

We refer to the binary options and types by ``low'' and ``high'', and in this sense, we can say that two decisions are of the ``same kind'', denoted by "$\sim$", or ``opposite'', denoted by "$\nsim$". Notice that the reduced game is symmetric to low and high.

Let us start with the last decision ${\cblue \widehat{p}^{blue}_3}$. As \eqref{eq:Egb} and \eqref{eq:Egr} show, it can only affect $\E \big( {\cblue \gamma^{blue}}\big) = 100 \cdot \E(-{\cred \delta^{red}_2} \cdot {\cblue \delta^{blue}_3})$. If ${\cred \widehat{p}^{red}_2} = {\cred 20\%}$, then $-{\cred \delta^{red}_2} \ge 0$, therefore, choosing ${\cblue \widehat{p}^{blue}_3} = {\cblue 90\%}$ weakly dominates ${\cblue \widehat{p}^{blue}_3} = {\cblue 10\%}$.
Analogously, if ${\cred \widehat{p}^{red}_2} = {\cred 80\%}$, then $-{\cred \delta^{red}_2} \le 0$, therefore, choosing ${\cblue \widehat{p}^{blue}_3} = {\cblue 10\%}$ weakly dominates ${\cblue \widehat{p}^{blue}_3} = {\cblue 90\%}$.
Hereby we could conclude by weak dominance that {\cblue Blue} should choose ${\cblue \widehat{p}^{blue}_3} \nsim {\cred \widehat{p}^{red}_2}$.

If we want to be more careful with elimination by \emph{weak dominance}, then we can use instead the following argument.
${\cblue \widehat{p}^{blue}_3} \nsim {\cred \widehat{p}^{red}_2}$ is strictly better for {\cblue Blue} unless if ${\cred \widehat{p}^{red}_2} = {\cred p^{red}_2}$.
Therefore, in every Nash equilibrium,
\begin{equation} \label{eq:p3}
    \P\big( {\cblue \widehat{p}_3^{blue}} \sim {\cred \widehat{p}^{red}_2} \nsim {\cred p^{red}_2} \big) = 0,
\end{equation}
or in other words, ${\cblue \widehat{p}_3^{blue}} \nsim {\cred \widehat{p}^{red}_2}$ whenever ${\cblue \widehat{p}_3^{blue}}$ is relevant.

From now on, the formula ${\cblue \gamma^{blue}}$ will be calculated as a function of ${\cblue p^{blue}_1}$, ${\cblue \widehat{p}^{blue}_1}$ and ${\cred \widehat{p}^{red}_2}$ and with the assumption that ${\cblue \widehat{p}_3^{blue}} \nsim {\cred \widehat{p}^{red}_2}$. But we will keep in mind that {\cblue Blue} will have a final decision with the only effect that it can possibly decrease ${\cblue \gamma^{blue}}$.

\smallskip

It leaves a total of 2 binary decisions for the two players: ${\cblue \widehat{p}^{blue}_1}$ and ${\cred \widehat{p}^{red}_2}$.
Consider now the second decision ${\cred \widehat{p}^{red}_2}$. It has an effect on $\E \big({\cred \gamma^{red}}\big) = 100 \cdot \E(-{\cblue \delta^{blue}_1} \cdot {\cred \delta^{red}_2})$ and $\E \big( {\cblue \gamma^{blue}}\big) = 100 \cdot \E(-{\cred \delta^{red}_2} \cdot {\cblue \delta^{blue}_3})$.
\begin{itemize}
    \item The effect on $\E \big({\cred \gamma^{red}}\big)$. If ${\cblue \widehat{p}^{blue}_1} \nsim {\cblue p^{blue}_1}$, then ${\cred \widehat{p}^{red}_2} \nsim {\cblue \widehat{p}^{blue}_1}$ makes
    $\E \big({\cred \gamma^{red}}\big)$ higher by $100 \cdot ({\cblue 0.7} - {\cblue 0.3}) \cdot ({\cred 0.8} - {\cred 0.2}) = 24$. If ${\cblue \widehat{p}^{blue}_1} = {\cblue p^{blue}_1}$, then $\E \big({\cred \gamma^{red}}\big) = 0$ independently of ${\cred \widehat{p}^{red}_2}$.
    \item The effect on $\E \big( {\cblue \gamma^{blue}}\big)$. If ${\cblue p^{blue}_3} \sim {\cred p^{red}_2}$, then ${\cred \widehat{p}^{red}_2}$ has no effect on $\E \big( {\cblue \gamma^{blue}}\big)$.
    But if ${\cblue p^{blue}_3} \nsim {\cred p^{red}_2}$, then ${\cred \widehat{p}^{red}_2 \nsim p^{red}_2}$ increases $\E \big( {\cblue \gamma^{blue}}\big)$ by $100 \cdot ({\cred 0.8} - {\cred 0.2}) \cdot ({\cblue 0.9} - {\cblue 0.1}) = 48$.
\end{itemize}
It implies that {\cred Red} should choose 
${\cred \widehat{p}^{red}_2} \nsim {\cblue \widehat{p}^{blue}_1}$ or ${\cred \widehat{p}^{red}_2} = {\cred p^{red}_2}$.
Namely, if both decisions would be the same, then {\cred Red} is strictly better by choosing it. Otherwise, the best choice depends on his belief about the probability that ${\cblue \widehat{p}^{blue}_1} \nsim {\cblue p^{blue}_1}$.

More formally, assume by contradiction that in any Nash equilibrium, $\eps = \P\big({\cred \widehat{p}^{red}_2} \sim {\cblue \widehat{p}^{blue}_1} \nsim {\cred p^{red}_2} \big) > 0$.
\eqref{eq:p3} implies that 
$0 < \eps = \P\big({\cred \widehat{p}^{red}_2} \sim {\cblue \widehat{p}^{blue}_1} \nsim {\cred p^{red}_2} \big)
= \P\big({\cred \widehat{p}^{red}_2} \sim {\cblue \widehat{p}^{blue}_1} \nsim {\cred p^{red}_2} \sim {\cblue \widehat{p}^{blue}_3} \big)$.
If {\cred Red} chose ${\cred \widehat{p}^{red}_2} = {\cred p^{red}_2}$ instead, then it would weakly increase $\E \big({\cred \gamma^{red}}\big)$, and strictly decrease $\E \big( {\cblue \gamma^{blue}}\big)$ by $48 \cdot \eps$ independently of ${\cblue \widehat{p}^{blue}_3}$. Therefore, this deviation would strictly increase the payoff of {\cred Red}, which is a contradiction.

\smallskip

Now the Reader can jump to the matrix games, but we explain in short the dilemma about the first decision ${\cblue \widehat{p}^{blue}_1}$.
It can affect both $\E({\cred \gamma^{red}})$ and $\E({\cblue \gamma^{blue}})$.
If {\cred Red} chooses the strategy ${\cred \widehat{p}^{red}_2} = {\cred p^{red}_2}$, then it does not matter what {\cblue Blue} does.
So consider the case when {\cred Red} uses his other strategy ${\cred \widehat{p}^{red}_2} \nsim {\cblue \widehat{p}^{blue}_1}$.
In this case, $\E({\cblue \gamma^{blue}})$ is $48$ or $0$, and the probabilities depend on ${\cblue \widehat{p}^{blue}_1}$, and also $\P\big( {\cred \delta^{red}_2} \ne 0 \big) = 1/2$ independently of ${\cblue \widehat{p}^{blue}_1}$.
But for example, if ${\cblue p^{blue}_1} = {\cblue 30\%}$, then they will choose ${\cblue \widehat{p}^{blue}_1} = {\cblue 70\%}$, then ${\cred \widehat{p}^{red}_2} = {\cred 20\%}$, and then ${\cblue \widehat{p}^{blue}_3} = {\cblue 90\%}$.
Therefore, 
\begin{equation*}
    \P\big({\cblue \gamma^{blue}} = 48\big)
    = \P\big( {\cblue \delta^{blue}_3 \ne 0} \big) = \P\big({\cblue p^{blue}_3} = {\cblue 10\%} \bigwhere {\cblue p^{blue}_1} = {\cblue 30\%} \big) = 3/4.
\end{equation*}
With ${\cblue \widehat{p}^{blue}_1} = {\cblue p^{blue}_1} = {\cblue 30\%}$, it would be $ P\big({\cblue \gamma^{blue}} > 0\big) = \P\big({\cblue p^{blue}_3} = {\cblue 90\%} \bigwhere {\cblue p^{blue}_1} = {\cblue 30\%} \big) = 1/4 $.
This shows that {\cblue Blue} should report ${\cblue \widehat{p}^{blue}_1} = 1 - {\cblue p^{blue}_1}$, and therefore, {\cred Red} should report the opposite.

Assuming that {\cblue Blue} and {\cred Red} use a symmetric strategy for high and low, we get the following $2 \times 2$ matrix game. (It does not include that {\cblue Blue} is able to decrease ${\cblue \gamma^{blue}}$ by his last move.
We use a normalization factor of $1/3$ for $\big({\cblue \E(\gamma^{blue})},\ {\cred \E(\gamma^{red})}\big)$ in order to have smaller integers.)

\begin{center}
{\renewcommand{\arraystretch}{1.6}
\begin{tabular}[center]{c||c|c|}

$\frac{1}{3}\big({\cblue \E(\gamma^{blue})},\ {\cred \E(\gamma^{red})}\big)$ & \cellcolor{lgrey} ${\cred \widehat{p}^{red}_2} = {\cred p^{red}_2}$ & \cellcolor{lgrey} ${\cred \widehat{p}^{red}_2} \nsim {\cblue \widehat{p}^{blue}_1}$ \\
\hline
\hline
\cellcolor{lgrey} ${\cblue \widehat{p}^{blue}_1} = {\cblue p^{blue}_1}$ &  \cellcolor{lorange}$({\cblue 0},\ {\cred 0})$ & $({\cblue 2},\ {\cred 0})$ \\
\hline
\cellcolor{lgrey} ${\cblue \widehat{p}^{blue}_1} = 1 - {\cblue p^{blue}_1}$ & $({\cblue 0},\ {\cred 0})$ & $({\cblue 6},\ {\cred 4})$ \\
\hline
\end{tabular}
}
\end{center}

\noindent
We can see that {\cblue Blue} prefers ${\cblue \widehat{p}^{blue}_1} = 1 - {\cblue p^{blue}_1}$, and therefore, {\cred Red} should choose ${\cred \widehat{p}^{red}_2} \nsim {\cblue \widehat{p}^{blue}_1}$.
We can conclude it by iterative elimination of dominated strategies, or we could just use our intuitive understanding of rationality.

\smallskip

We could find some arguments to exclude the rationality of asymmetric strategies with respect to high and low, but it is easier to extend the matrix with all asymmetric strategies. The further pure strategies of {\cblue Blue} are always reporting ${\cblue p^{blue}_1} = {\cblue 30\%}$ and always reporting ${\cblue p^{blue}_1} = {\cblue 70\%}$. As for {\cred Red}, we only need to consider the options when ${\cblue \widehat{p}^{blue}_1} \sim {\cred p^{red}_2}$ because otherwise he should choose ${\cred \widehat{p}^{red}_2} = {\cred p^{red}_2}$. Therefore, the two extra pure strategies of red are the followings.
\begin{itemize}
    \item "preferably {\cred high}", meaning that ${\cred \widehat{p}^{red}_2} = {\cred 80\%}$ unless if $\big({\cblue \widehat{p}^{blue}_1} = {\cblue 70\%}$ and ${\cred p^{red}_2} = {\cred 20\%}\big)$;
    \item "preferably {\cred low}", meaning that ${\cred \widehat{p}^{red}_2} = {\cred 20\%}$ unless if $\big({\cblue \widehat{p}^{blue}_1} = {\cblue 30\%}$ and ${\cred p^{red}_2} = {\cred 80\%}\big)$.
\end{itemize}

Now we get the following $4 \times 4$ game. Remember that we got it by iterative elimination of \emph{only strictly dominated strategies} (in a dynamic sense) and then assuming that the last move is ${\cblue \widehat{p}^{blue}_3} \nsim {\cred \widehat{p}^{red}_2}$ justified by that ${\cblue \widehat{p}^{blue}_3} \sim {\cred \widehat{p}^{red}_2}$ never increases ${\cblue \E(\gamma^{blue})}$ and never changes ${\cred \E(\gamma^{red})}$.

\begin{center}
{\renewcommand{\arraystretch}{1.6}
\begin{tabular}[center]{c||c|c|c|c|}
$\frac{1}{3}\big({\cblue \E(\gamma^{blue})},\ {\cred \E(\gamma^{red})}\big)$ & \cellcolor{lgrey} ${\cred \widehat{p}^{red}_2} = {\cred p^{red}_2}$ & \cellcolor{lgrey} ${\cred \widehat{p}^{red}_2} \nsim {\cblue \widehat{p}^{blue}_1}$ & \cellcolor{lgrey} ${\cred \widehat{p}^{red}_2}$ pref. {\cred high} & \cellcolor{lgrey} ${\cred \widehat{p}^{red}_2}$ pref. {\cred low} \\
\hline
\hline
\cellcolor{lgrey} ${\cblue \widehat{p}^{blue}_1} = {\cblue p^{blue}_1}$ &  \cellcolor{lorange}$({\cblue 0},\ {\cred 0})$ & $({\cblue 2},\ {\cred 0})$ & $({\cblue 1},\ {\cred 0})$ & $({\cblue 1},\ {\cred 0})$ \\
\hline
\cellcolor{lgrey} ${\cblue \widehat{p}^{blue}_1} = 1 - {\cblue p^{blue}_1}$ & $({\cblue 0},\ {\cred 0})$ & $({\cblue 6},\ {\cred 4})$ & $({\cblue 3},\ {\cred 2})$ & $({\cblue 3},\ {\cred 2})$ \\
\hline
\cellcolor{lgrey} ${\cblue \widehat{p}^{blue}_1} = {\cblue 70\%}$ & $({\cblue 0},\ {\cred 0})$ & $({\cblue 4},\ {\cred 2})$ & $({\cblue 1},\ {\cred 0})$ & $({\cblue 3},\ {\cred 2})$ \\
\hline
\cellcolor{lgrey} ${\cblue \widehat{p}^{blue}_1} = {\cblue 30\%}$ & $({\cblue 0},\ {\cred 0})$ & $({\cblue 4},\ {\cred 2})$ & $({\cblue 3},\ {\cred 2})$ & $({\cblue 1},\ {\cred 0})$ \\
\hline
\end{tabular}
}
\end{center}



Notice that the truthful strategy profile is also a Nash equilibrium and a PBE: ${\cblue \widehat{p}^{blue}_1} = {\cblue p^{blue}_1}$ and ${\cred \widehat{p}^{red}_2} = {\cred p^{red}_2}$, and {\cblue Blue} decreases ${\cblue \E(\gamma^{blue})}$ to constant {\cblue 0} by ${\cblue \widehat{p}^{blue}_3} = {\cblue p^{blue}_3}$.

\subsection{Further important variants of the setup}

\begin{enumerate}
    \item In order to get a counterexample not only for truthfulness but also for efficiency, we can add a public decision to the setup in round 2, where {\cred Red} gets an additional utility ${\cred 1}$ if he reported the truth.
    It changes the matrix game a bit but otherwise it does not change the entire argument.
    \item If we modify the setup so that ${\cblue p^{blue}_1}$ is uniform random from the interval ${\cblue [30\%, 70\%]}$, then it makes the truthful strategy profile even less reasonable. Because if {\cblue Blue} reports ${\cblue \widehat{p}^{blue}_1} = {\cblue 30\%}$ or ${\cblue \widehat{p}^{blue}_1} = {\cblue 70\%}$, then it is even harder for {\cred Red} having no doubt that {\cblue Blue} was just telling the truth.
    \item If {\cblue Blue} has an option to reveal his type to {\cred Red} (which is a relaxation of the model), then the truthful strategy profile will no longer be subgame-perfect. Because it is easy to see that revealing ${\cblue \widehat{p}^{blue}_1} = 1 - {\cblue p^{blue}_1}$ would strictly incentivize {\cred Red} to choose ${\cred \widehat{p}^{red}_2} \nsim {\cblue \widehat{p}^{blue}_1}$, which is strictly better for {\cblue Blue}.
    (If {\cblue Blue} reveals his type to {\cred Red}, then a subgame starts from round 2. See Appendix~\ref{sec:belief} about the relation between PBE and subgame-perfection.)
\end{enumerate}

\subsection{Proof of Lemma \ref{thm:parity}} \label{app:Lemmaproof}

Lemma~\ref{thm:general} shows that the main part of the expected payment can be expressed by $\delta$, and shows the reason why it incentivizes the agents to deviate in a synchronous way with alternating signs. Lemma~\ref{thm:parity} is a direct consequence of it (using the fact that $\E({\cred p^{red}_k} \cdot {\cblue p^{blue}_k}) = {\cred p^{red}_0} \cdot {\cblue p^{blue}_0}$).

\begin{Lemma} \label{thm:general}
In Example~\ref{general-example},
\begin{alignat}{1}
    \E \Big(\sum_{t=1}^k {\cblue \gamma^{blue}_t}\Big)
    &= 100 \cdot \E \bigg( \Big(  \sum_{t=1}^{k-1} \big({\cred \delta^{red}_{t}} - {\cred \delta^{red}_{t-1}} \big) {\cblue \delta^{blue}_{t}} \Big) - {\cred \delta^{red}_{k-1}} {\cblue \delta^{blue}_k} + {\cred p^{red}_0} {\cblue p^{blue}_0} - {\cred p^{red}_k} {\cblue \widehat{p}^{blue}_k} \bigg) \label{eq:Egb0}
    \\ \E \Big(\sum_{t=1}^k {\cred \gamma^{red}_t}\Big)
    &= 100 \cdot \E\bigg( \Big( \sum_{t=1}^{k-1} \big({\cblue \delta^{blue}_{t}} - {\cblue \delta^{blue}_{t-1}} \big) {\cred \delta^{red}_{t}} \Big) - {\cblue \delta^{blue}_{k-1}} {\cred \delta^{red}_k} + {\cblue p^{blue}_0} {\cred p^{red}_0} - {\cblue p^{red}_k} {\cred \widehat{p}^{red}_k}
    \bigg)
    \label{eq:Egr0}
\end{alignat}
These are true even if players can observe the past types of each other. We assume only that ${\cblue \widehat{p}^{blue}_k}$ is independent of ${\cred p^{red}_k}$ conditional on the history until round $k-1$, and vice versa.
\end{Lemma}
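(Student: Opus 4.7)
The plan is to prove \eqref{eq:Egb0} by a summation-by-parts identity followed by martingale bookkeeping; \eqref{eq:Egr0} then follows by swapping the roles of {\cblue Blue} and {\cred Red}. Starting from ${\cblue \gamma^{blue}_t}/100 = {\cred \widehat{p}^{red}_{t-1}}({\cblue \widehat{p}^{blue}_{t-1}} - {\cblue \widehat{p}^{blue}_t})$ and the boundary values ${\cblue \widehat{p}^{blue}_0} = {\cblue p^{blue}_0}$, ${\cred \widehat{p}^{red}_0} = {\cred p^{red}_0}$, Abel summation gives the deterministic identity
\begin{equation*}
\sum_{t=1}^{k}{\cred \widehat{p}^{red}_{t-1}}({\cblue \widehat{p}^{blue}_{t-1}} - {\cblue \widehat{p}^{blue}_t})
= {\cred p^{red}_0}{\cblue p^{blue}_0} - {\cred \widehat{p}^{red}_{k-1}}{\cblue \widehat{p}^{blue}_k}
+ \sum_{t=1}^{k-1}({\cred \widehat{p}^{red}_t} - {\cred \widehat{p}^{red}_{t-1}}){\cblue \widehat{p}^{blue}_t}.
\end{equation*}

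I substitute $\widehat{p}^i_t = p^i_t + \delta^i_t$ in the inner sum and expand each summand into four cross terms. Write $\mathcal{F}_{t-1}$ for the full history through round $t-1$. The two cross terms containing ${\cred p^{red}_t} - {\cred p^{red}_{t-1}}$ vanish termwise in expectation, because both ${\cblue \widehat{p}^{blue}_t}$ and ${\cblue p^{blue}_t}$ are measurable with respect to a $\sigma$-field under which ${\cred p^{red}_t} - {\cred p^{red}_{t-1}}$ is still a martingale increment (agents' updates are independent, and no agent's round-$t$ report depends on another agent's round-$t$ type). The cross term ${\cblue \delta^{blue}_t}({\cred \delta^{red}_t} - {\cred \delta^{red}_{t-1}})$ is exactly what \eqref{eq:Egb0} requires. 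The remaining piece $\E\big[{\cblue p^{blue}_t}({\cred \delta^{red}_t} - {\cred \delta^{red}_{t-1}})\big]$ does not vanish, but applying the martingale $\E[{\cblue p^{blue}_t}\mid \mathcal{F}_{t-1}] = {\cblue p^{blue}_{t-1}}$ to the $\mathcal{F}_{t-1}$-measurable factor ${\cred \delta^{red}_{t-1}}$ yields
\begin{equation*}
\E\big[{\cblue p^{blue}_t}({\cred \delta^{red}_t} - {\cred \delta^{red}_{t-1}})\big]
= \E[{\cblue p^{blue}_t}{\cred \delta^{red}_t}] - \E[{\cblue p^{blue}_{t-1}}{\cred \delta^{red}_{t-1}}],
\end{equation*}
so summation from $t = 1$ to $k - 1$ with ${\cred \delta^{red}_0} = 0$ telescopes to the single residual $\E[{\cblue p^{blue}_{k-1}}{\cred \delta^{red}_{k-1}}]$.

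Finally I expand the boundary $-{\cred \widehat{p}^{red}_{k-1}}{\cblue \widehat{p}^{blue}_k}$ into four pieces. One piece is the target $-\E[{\cred \delta^{red}_{k-1}}{\cblue \delta^{blue}_k}]$. The piece $-\E[{\cred p^{red}_{k-1}}{\cblue \widehat{p}^{blue}_k}]$ equals $-\E[{\cred p^{red}_k}{\cblue \widehat{p}^{blue}_k}]$ after invoking the hypothesis that ${\cblue \widehat{p}^{blue}_k}$ is independent of ${\cred p^{red}_k}$ given $\mathcal{F}_{k-1}$, together with the martingale property of ${\cred p^{red}}$. The piece $-\E[{\cred \delta^{red}_{k-1}}{\cblue p^{blue}_k}]$ reduces to $-\E[{\cred \delta^{red}_{k-1}}{\cblue p^{blue}_{k-1}}]$ by the martingale of ${\cblue p^{blue}}$, cancelling exactly the residual left over from the inner sum. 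Collecting everything yields \eqref{eq:Egb0}; then \eqref{eq:Egr0} follows verbatim after interchanging the roles of {\cblue Blue} and {\cred Red}.

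The main obstacle is this final cancellation: the piece $\E[{\cblue p^{blue}_t}({\cred \delta^{red}_t} - {\cred \delta^{red}_{t-1}})]$ is not a martingale difference, so the bookkeeping must verify that the telescoping residual from the inner sum is matched precisely by the leftover produced by splitting the boundary expansion, and this is the unique place where the round-$k$ conditional-independence hypothesis is invoked.
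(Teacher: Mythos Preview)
Your argument is correct and coincides with the paper's proof: both compute $\sum_t {\cred \widehat{p}^{red}_{t-1}}({\cblue \widehat{p}^{blue}_{t-1}}-{\cblue \widehat{p}^{blue}_t})$ by combining the decomposition $\widehat{p}=p+\delta$ with the martingale identities $\E[({\cred p^{red}_t}-{\cred p^{red}_{t-1}}){\cblue \widehat{p}^{blue}_t}]=0$ and $\E[({\cblue p^{blue}_t}-{\cblue p^{blue}_{t-1}}){\cred \delta^{red}_{t-1}}]=0$, and then telescoping what remains. The only cosmetic difference is the order of operations---you apply Abel summation first and decompose afterward, whereas the paper decomposes first and then reindexes---so the cancellations you track (including the residual $\E[{\cblue p^{blue}_{k-1}}{\cred \delta^{red}_{k-1}}]$ that meets the boundary piece) are exactly the ones in the paper, just grouped differently.
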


\begin{proof}
The martingale property of probabilities implies that $\E({\cblue p^{blue}_{t}}) = {\cblue p^{blue}_{t-1}}$ and $\E({\cred p^{red}_t}) = {\cred p^{red}_{t-1}}$ for every $t > 0$.
Furthermore, as these martingales for {\cblue Blue} and {\cred Red} are independent, $\E({\cblue p^{blue}_{t_1}} \cdot {\cred p^{red}_{t_2}}) = {\cblue p^{blue}_0} \cdot {\cred p^{red}_0}$ for every $t_1$ and $t_2$.
Similarly, ${\cred p^{red}_k} - {\cred p^{red}_{k-1}}$ is independent of ${\cblue \widehat{p}^{blue}_{k-1}}$, and we assumed that it is also independent of ${\cblue \widehat{p}^{blue}_k}$ and vice versa. These imply the followings.
\begin{equation} \label{eq:exp0b}
    \E\Big( \big( {\cred p^{red}_t} - {\cred p^{red}_{t-1}} \big) \cdot {\cblue \widehat{p}^{blue}_{t-1}} \Big)
    = \E\Big( \big( {\cred p^{red}_t} - {\cred p^{red}_{t-1}} \big) \cdot {\cblue \widehat{p}^{blue}_t} \Big)
    = \E\Big( \big( {\cred p^{red}_t} - {\cred p^{red}_{t-1}} \big) \cdot {\cblue \delta^{blue}_{t-1}} \Big) = 0
\end{equation}
\begin{equation} \label{eq:exp0r}
    \E\Big( \big( {\cblue p^{blue}_t} - {\cblue p^{blue}_{t-1}} \big) \cdot {\cred \widehat{p}^{red}_{t-1}} \Big)
    = \E\Big( \big( {\cblue p^{blue}_t} - {\cblue p^{blue}_{t-1}} \big) \cdot {\cred \widehat{p}^{red}_t} \Big)
    = \E\Big( \big( {\cblue p^{blue}_t} - {\cblue p^{blue}_{t-1}} \big) \cdot {\cred \delta^{red}_{t-1}} \Big) = 0
\end{equation}
If add up \eqref{eq:gb} for every $t$ and we take expectation, then we get the following.
\begin{equation*}
    \E \Big(\sum_{t=1}^k {\cblue \gamma^{blue}_t}\Big)
    = 100 \cdot \sum_{t=1}^k \E \Big( {\cred \widehat{p}^{red}_{t-1}} \cdot
    \big({\cblue \widehat{p}^{blue}_{t-1}} - {\cblue \widehat{p}^{blue}_t}\big) \Big)
\end{equation*}
\begin{equation*}
    = 100 \cdot \sum_{t=1}^k \E \Big(
    {\cred \delta^{red}_{t-1}} \cdot \big({\cblue \delta^{blue}_{t-1}} - {\cblue \delta^{blue}_t}\big)
    + {\cred \delta^{red}_{t-1}} \cdot \big({\cblue p^{blue}_{t-1}} - {\cblue p^{blue}_t}\big)
    + {\cred p^{red}_{t-1}} \cdot \big({\cblue \widehat{p}^{blue}_{t-1}} - {\cblue \widehat{p}^{blue}_t}\big) \Big)
\end{equation*}
\begin{equation*}
    \mathop{=}^{\eqref{eq:exp0r}} 100 \cdot \sum_{t=1}^k \E \Big(
    {\cred \delta^{red}_{t-1}} \cdot \big({\cblue \delta^{blue}_{t-1}} - {\cblue \delta^{blue}_t}\big)
    + {\cred p^{red}_{t-1}} \cdot \big({\cblue \widehat{p}^{blue}_{t-1}} - {\cblue \widehat{p}^{blue}_t}\big) \Big)
\end{equation*}
\begin{equation*}
    = 100 \cdot \sum_{t=1}^k \E \Big(
    {\cred \delta^{red}_{t-1}} \cdot {\cblue \delta^{blue}_{t-1}} - {\cred \delta^{red}_{t-1}} \cdot {\cblue \delta^{blue}_t}
    + {\cred p^{red}_{t-1}} \cdot {\cblue \widehat{p}^{blue}_{t-1}} - {\cred p^{red}_{t-1}} \cdot {\cblue \widehat{p}^{blue}_t}\big) \Big)
\end{equation*}
\begin{equation*}
    \mathop{=}^{\eqref{eq:exp0b}} 100 \cdot \E \Big(
    \sum_{t=1}^{k-1} {\cred \delta^{red}_t} \cdot {\cblue \delta^{blue}_t} - \sum_{t=1}^k {\cred \delta^{red}_{t-1}} \cdot {\cblue \delta^{blue}_t}
    + \sum_{t=1}^k \big( {\cred p^{red}_{t-1}} \cdot {\cblue \widehat{p}^{blue}_{t-1}} - {\cred p^{red}_t} \cdot {\cblue \widehat{p}^{blue}_t}\big) \Big) 
\end{equation*}
\begin{equation*}
    100 \cdot \E \bigg( \Big(  \sum_{t=1}^{k-1} \big({\cred \delta^{red}_{t}} - {\cred \delta^{red}_{t-1}} \big) {\cblue \delta^{blue}_{t}} \Big) - {\cred \delta^{red}_{k-1}} {\cblue \delta^{blue}_k} + {\cred p^{red}_0} {\cblue p^{blue}_0} - {\cred p^{red}_k} {\cblue \widehat{p}^{blue}_k} \bigg)
\end{equation*}

This proves \eqref{eq:Egb0}, and we can get the proof of \eqref{eq:Egr0} in an analogous way.
\end{proof}

\section{Detailed analysis of Example~\ref{general-example}} \label{App1}

We try to analyze Example~\ref{general-example} with small numbers $K$ of rounds and assume that the agents are always able to report arbitrary probabilities in $[0, 1]$. Keep in mind that the message of the paper applies better to larger $K$. One of the purposes of this section is to show the reason why we did not give a full characterization of the PBEs in Example~\ref{general-example}.

\bigskip

{\bf If} $\boldsymbol{K = 1}$, then PBE only means Nash equilibrium, and we play a variant of the coordination game (or battle of sexes). There are 2 stable and 2 or 3 unstable BPEs depending on ${\cblue p^{blue}_0}$ and ${\cred p^{red}_0}$.
\begin{enumerate}
\item The agents always report ${\cblue \widehat{p}^{blue}_1} = {\cred \widehat{p}^{red}_1} = 0$.
\item If ${\cblue p^{blue}_0}, {\cred p^{red}_0} \le 84\%$, then another equilibrium is always reporting ${\cblue \widehat{p}^{blue}_1} = {\cred \widehat{p}^{red}_1} = 1$.
\item The agents always report the truth, namely, ${\cblue \widehat{p}^{blue}_1} = {\cblue p^{blue}_1} \in \{{\cblue 0}, {\cblue 1}\}$ and
${\cred \widehat{p}^{red}_1} = {\cred p^{red}_1} \in \{0, 1\}$.
\item If ${\cblue p^{blue}_1} = {\cblue 0}$, then ${\cblue \widehat{p}^{blue}_1} = {\cblue 0}$. If ${\cblue p^{blue}_1} = {\cblue 1}$, then $\P({\cblue \widehat{p}^{blue}_1} = {\cblue 1}) = \frac{100}{104}$. The same applies for {\cred Red}.
\item Only for ${\cblue p^{blue}_0}, {\cred p^{red}_0} \le 84\%$. If ${\cblue p^{blue}_1} = {\cblue 1}$, then ${\cblue \widehat{p}^{blue}_1} = {\cblue 1}$. $\P\big(({\cblue p^{blue}_1} = {\cblue 0})\text{ AND }({\cblue \widehat{p}^{blue}_1} = {\cblue 1})\big) = \frac{16}{84} {\cblue p^{blue}_0}$. The same applies for {\cred Red}.
\end{enumerate}

We note that if cheap talk is allowed, then the set of PBEs is richer.
It can be even richer with the possibility of some noisy signaling or other intermediate levels of communication.

Even though $K = 1$ used a binary type and a binary report per agent, it already had a number of PBEs. Now it is easier to believe that if $\boldsymbol{K \ge 2}$, where the first signal and the first report are both an arbitrary real number from $[0, 1]$, then the set of PBEs is very rich, we are not able to characterize it.

From an applied point of view, there are two reasonable PBEs. One is ${\cblue \widehat{p}^{blue}_1} = {\cred \widehat{p}^{red}_1} = 0$, ${\cblue \widehat{p}^{blue}_2} = {\cred \widehat{p}^{red}_2} = 1$. The other one is ${\cblue \widehat{p}^{blue}_1} = {\cred \widehat{p}^{red}_1} = 1$, ${\cblue \widehat{p}^{blue}_2} = {\cred \widehat{p}^{red}_2} = 0$.
The truthful strategy profile is one of the unstable PBEs, which does not respect weak dominance because after ${\cblue \widehat{p}^{blue}_{K-1}} = {\cblue 0}$, {\cred Red} is weakly better by reporting ${\cred \widehat{p}^{red}_K} = {\cred 1}$.

\section{The meaning of ``type'' in the Athey--Segal paper} \label{sec:type}

This section is a more detailed analysis extending Section~\ref{sec:type-info} about the possible meanings of ``type'' in the Athey-Segal paper, and how it could have been defined.

\subsection{If ``type'' means payoff type}

The example of Athey and Segal in \cite{AtSe} (Section 3) may suggest that type means payoff type.
However, this is NOT the understanding of that paper. For example, independent payoff types would make no restriction about the information of the players, but it contradicts the following quote.

\begin{center}
``independent types (...) means that, conditional on decisions
\\ (...), an agent's private information does not have any effect
\\ on the distribution of the current (...) types of other agents''
\end{center}

\noindent
We note that the word ``effect'' is a bit misleading here. For example, in their sense, if $i$ knows $\theta^j_1$, then his knowledge ``has an effect'' on the distribution of $\theta^j_t$.



However, the paper and the results could be modified so that ``type'' would mean payoff type, and they could have proved the same theorem with essentially the same proof.
Because by the same reason as we explained in Appendix~\ref{sec:belief}, if we allowed the agents to observe the past of each other to any degree, then formally this would not rule out the ``unconvincing'' PBE.


\subsection{If ``type'' means full type}

In this case, $\theta^i_{t+1}$ includes all public reports from the previous round $\widehat{\theta}^N_t$. This contradicts that it is chosen by a stochastic function $\mu(x_t^i, \theta_t^i)$, because $\widehat{\theta}^N_t$ cannot be written as a function of $x_t^i$ and $\theta_t^i$.

\subsection{If ``type'' means full type excluding the listed public information}

This understanding looks unnatural, but it is consistent with the paper.
However, with this understanding, the independent types assumption still means that this is a very restrictive model.
Formally, it does not even allow two agents to observe the same weather because it would contradict the assumption of independent types.
To be fair, we know from different arguments that this alone cannot spoil a PBE-implementation.
But the model and the PBE-implementation definitely do not allow an agent to reveal any information that is correlated with his past payoff type.

For example, assume that {\cblue Blue} can only have two types, {\cblue HIGH} or {\cblue LOW}, and his type is constant throughout the game.
Assume that {\cblue Blue} has {\cblue LOW} type but {\cred Red} believes it is {\cblue HIGH}. Then whatever {\cblue Blue} tries to reveal, he cannot shake the faith of {\cred Red} in his wrong belief, even in the weakest sense: {\cblue Blue} cannot induce {\cred Red} to make a dominating move which gives {\cred Red} an extra utility 1 if {\cblue Blue} has type {\cblue LOW}, but it makes no difference if his type is {\cblue HIGH}.
Specifically, this assumption means that even if {\cblue Blue} reports {\cblue HIGH} and {\cblue LOW} alternately throughout the game, he cannot prevent {\cred Red} from always being sure that the latest report is the truth.

\subsection{A pure example for an  ``unconvincing'' equilibrium} \label{sec:simple-example}

We give a rough definition of an ``unconvincing'' implementation using the following example.

\begin{Example} \label{simple-example}
Consider the following game with $k$ rounds and with $n \ge 2$ agents. In every round, every agent chooses YES or NO simultaneously. For each round and each agent $i$, if $i$ chooses YES and at least one other agent chose YES in the current or in any of the previous rounds, then $i$ gets utility 1, otherwise, he gets 0 for that round. Every agent maximizes his (expected) total utility of the $k$ rounds.
\end{Example}

We did not specify the information of the players (observability of past actions, signaling about them, cheap talk, etc.), so in this sense, this example describes a class of games. The most important version is when the decisions are private information. But from a more applied point of view, we should keep an eye on the cases where the decisions are not perfectly private.

\medskip

If we use our intuitive understanding of rational behavior, then we see that always choosing YES is the best strategy, and choosing NO has only disadvantages. 
$k = 1$ round is already an interesting case with (at least) two Nash equilibria. $(NO, NO, ..., NO)$ is the other Nash equilibrium which we call the ``unconvincing'' equilibrium.
Always choosing NO remains a PBE with a larger number of rounds $k \ge 2$, with appropriate beliefs.
We can say that this PBE is ``even less convincing'' than with $k=1$.
Because even after an agent $i$ chose YES, and $i$ would like to share the fact that he chose YES, this equilibrium requires every other agent $j$ to keeping believe that $i$ chose NO.
(More precisely, this ``unconvincing'' PBE is formally valid even if the agents observe the decisions of each other, see Appendix~\ref{sec:belief}.)

If a mechanism implements a goal in such an equilibrium, then we say that this is an  ``unconvincing'' implementation.
Especially if such an issue happens not only for degenerate setups, but the mechanism transforms completely generic setups into degenerate games and implements a goal with an ``unconvincing'' equilibrium.

\section{Is PBE a refinement of subgame-perfect equilibrium?} \label{sec:belief}



PBE is a refinement of subgame-perfect equilibrium, but it applies only for proper subgames, meaning that it is common knowledge that they are in this subgame. But if one of the agents $i$ knows it not for sure but only with probability 1, then it is no longer a proper subgame, and subgame-perfection does not apply here.

According to the standard understanding of information and belief, if an observation is a part of the rules, then it is observed for sure. But if we are speaking about a robust model to observability (for example, the Athey--Segal model with dependent types), then an observation may be made with probability 1 but never for sure.
And hereby, if we apply the one-shot deviation principle, then we can find a PBE without noticing that the beliefs of the agents are inconsistent with their observations.

Specifically, the ``unconvincing'' PBE in Example~\ref{simple-example} is formally valid even if the agents partially or perfectly observe the decisions of each other (unless if we add any perfect observation to the \emph{rules} of the game).
Because if every agent believes that everybody else chose and will choose NO, then if agent $i$ observes any YES by another agent $j$, then it is a 0-probability event according to the belief of agent $i$. Therefore, the Bayesian update rule allows agent $i$ keeping believe that every other agent chose and will choose NO (and agent $i$ believes that his eyes just dazzled when he observed a YES from $j$), and agent $i$ can believe that this is still a common belief (with probability 1).


\section{Less reporting does not fix the problem} \label{app:roundsense}

Example~\ref{general-example} uses reports which do not have any effect on the public decision.
One could naively ask whether we can fix the mechanism by always asking to report the smallest necessary information for the efficient decision.

The smallest problem is that this question is not well-defined because there is no canonical definition of the smallest necessary information to be reported for the efficient decision. But the following two arguments show more clearly that these kinds of ideas cannot fix the problem.

{\bf Reason 1.} We can easily construct an example to show that an agent can make a useless report of his type essentially by claiming a fake reason.
Formally, consider an agent and a round where his type $\theta$ does not have to be reported. Let us modify the setup in the following way. The agent receives an extra bit $b \in \{0, 1\}$, and we add an extra public decision $\widehat{\theta}$. If $b = 1$ and $\widehat{\theta} = \theta$, then the agent gets an extra utility $\epsilon$. If $b = 0$, then $\widehat{\theta}$ has no effect.

Clearly, if $b = 1$, then the agent should report his type.
But if $b = 0$ and he wants to, then he can still claim that $b = 1$ and hereby he can report his type.
This kind of technique works unless if the designer can rule out every possibility that the agent may make use of a public decision that does not affect the others.

{\bf Reason 2.} Even a tiny reason can make it necessary for a type to be reported. But from a practical point of view, our counterexample in Section~\ref{sec:wrong-ex} has some robustness. It may be useful to go back to Example~\ref{simple-example} and see what if we introduce an arbitrarily small cost for choosing YES. It would make always choosing NO a formally legit PBE in many senses. Mainly because it will no longer be weakly dominated by any other strategy. However, from a more applied point of view, this equilibrium would still be ``pretty unconvincing'', especially if $k$ is large. Now if we compare it with Example~\ref{general-example}, then we can see that the situation is very similar. Therefore, we cannot hope that such a modification could really fix the problem.

\section{Weaknesses of the unbalanced Team Mechanism} \label{unbalanced}

The unbalanced Team Mechanism also has a weakness in that two colluding agents can get as much payoff as they want. We show an example of this issue.

Consider the following setup with two agents. In each round, each agent receives a signal (payoff type) 1000 or -1 independently, with probabilities of $1/2$. Then the designer makes a public decision YES or NO. If it is NO, then both agents get utility 0. But if the decision is YES, then both agents get the utility equal to the signal.

Consider the case when an agent receives a signal $-1$. If he reports $1000$ instead, then it costs him $0$ or $2$, but it provides $1001$ or $999$ more utility to the other agent. The former amounts (costing $0$ providing $1001$) apply if the agent expects the other agent to report $1000$.
This is a strong motivation for collusion, moreover, it does not even require collusion, especially in the infinite-horizon game.
For example, reporting $1000$ as long as the other agent also did so (since the beginning, or only in the previous round, or in the previous $k$ rounds, any of these versions work) is another PBE. This equilibrium is not efficient, but it provides a higher payoff for the two agents.

\section{How could a valid argument lead to an ``unconvincing'' mechanism?} \label{how}

The issue primarily arises from the known weaknesses of Perfect Bayesian Equilibrium and the standards of non-robust mechanism design. 


PBE is a (slightly imperfect) necessary condition of the intuitive term ``rationality'', but it is not at all a sufficient condition.
Therefore, for analyzing games, it is a meaningful plan to find every PBE of the game, and after that, we make an analysis in which we understand which of the PBEs are meaningful for practice.
This latter analysis may not be completely well-defined but it may use our intuitive understanding of rationality.
It does not sound ideal, but this is the best we can do. 
Accordingly, if we design a mechanism to implement a goal in a practically meaningful way, then it is not enough at all to check that, under the mechanism, one of the PBEs satisfies the goal.

In addition, the PBE-implementation of the Athey--Segal mechanism used the one-stage deviation principle. It created an even higher risk of implementing the goal with an ``unconvincing'' PBE because of the following reason.
PBE has a blind spot about off-equilibrium paths.
The Bayesian update rule gives no restriction on the belief of an agent after an off-equilibrium action of another agent.
Every belief is allowed here not because every belief is rational, but the definition of PBE does not undertake to tell what cannot be a rational belief in this situation.
(PBE is so much permissive that it does not always respect subgame-perfection, see Appendix~\ref{sec:belief}.) 
The one-stage deviation principle (with the revelation principle) uses this freedom of beliefs in a PBE to avoid the rationality conditions about the possibility that another player already deviated.
In other words, the one-stage deviation principle is functioning also as a hacking technique that shifts some of the rationality conditions into this blind spot of PBE.
Applying the one-stage deviation principle to Example~\ref{simple-example} can help to understand this issue better.

A further important weakness of the balanced Team Mechanism is that it strictly assumes an unrealistic information structure, as we discussed it in Section~\ref{sec:weaknesses}.

\section{The original mechanism and nonquasilinear payoffs} \label{app:non-qlinear}

The mechanism in \cite{CsE} was designed for a tendering setup where a number of agents are competing for participation in a project owned by another player called the principal. Rejected agents will do nothing and gain nothing.
Now the mechanism was the following.
Each agent makes an offer for a contract about the payment rule between him and the principal, as a function of the contractible events including communication.
If an agent is accepted, then this payment rule will apply.
Then the principal chooses a strategy (about which of them to accept and how to communicate with the agents) that maximizes her minimum possible expected payoff, where expectation applies only to the stochastic changes on her own private type and the public type.

When we apply that original mechanism to this model with fixed initial types, then it means that we assume that the principal and the agents reported truthfully about their initial types, and we consider the continuation of the game with the winner agents of this tender and with the principal.
If we assume that the agents have quasilinear payoffs (i.e. payoff = utility + payment), and we restrict the set of reports to the potentially truthful ones, then we get the mechanism in Section~\ref{sec:my-mechanism}.

The original mechanism does not have any implicit assumption of
quasilinear payoffs, and accordingly, that version works ``better'' if some agents have nonquasilinear payoffs (or they have limited responsibility, etc.).
It is very difficult to formally compare two mechanisms partially because the efficient outcome is no longer well-defined here.
Therefore, we will just argue in an intuitive way that this is a nicer and more natural generalization of the direct mechanism.
For example, if a public decision has an effect only on agent $i$ about whether or not to take a huge risk, then even though $i$ might be very risk-averse, the mechanism in Section~\ref{sec:my-mechanism} (or the Team Mechanisms) makes the decision only considering the expected gain. (In simple cases, the agent can hack the mechanism by misreporting, but we cannot get an universal solution by individual hackings.)
But the original mechanism chooses the best according to the preference of $i$.

In more detail, assume that the agents have a payoff function $F(\theta_{1, 2, ..., T}^{0,i}, x_{1, 2, ..., T}, y^i)$ which is monotonic and unbounded by $y^i$, and every agent maximizes the expected value of $F$. 
If we restrict the set of reports to the potentially truthful (or recommended) ones with this weaker assumption, then the mechanism will have the following structure.

Given the types, we determine a decision strategy, the payment rule $y^i$ (with $\sum_{i\in N} y^i \equiv 0$) and the expected payoffs $f^i$ satisfying the following.
For each agent $i$, if he reports his types and makes his private decisions truthfully (implying $\theta_{1, 2, ..., T}^{0,i} = \widehat{\theta}_{1, 2, ..., T}^{0,i}$), then 
\begin{equation*}
\widehat{\theta}_{1, 2, ..., T}^{N \setminus \{i\}} \to \E_{\theta_{1, 2, ..., T}^{0,i}}\Big(F\big(\theta_{1, 2, ..., T}^{0,i},\ x_{1, 2, ..., T}(\widehat{\theta}_{1, 2, ..., T}^{N_0}),\ y^i(\widehat{\theta}_{1, 2, ..., T}^{N_0})\big)\Big) \equiv f^i,
\end{equation*}
where the distribution of $\theta_{1, 2, ..., T}^{0,i}$ is also affected by $\widehat{\theta}_{1, 2, ..., T}^{N \setminus \{i\}}$ via $x_{1, 2, ..., T}$, and $f^i$ is a constant.

These properties define the mechanism by a backward recursion in the sense that we can recursively define the vectors $f^N$ we can achieve.
The point is that each time an agent $i$ reports a change in his private type, he compensates the other agents for the effect, making them neutral about this change. Therefore, if we replace the payoff function of $i$ with its expected value before the last change in his private type considering the induced payments, then we get to the same problem with one round less.
Reduction by one of the latest decisions is simpler, we just take the union of the sets of $f^N$ with the different decisions.

It may seem that there is an additional problem here with the choice between the different Pareto optimal vectors $f^N$ (at the beginning of the game).
But the truth is that this is essentially the same problem as we had and ignored in the case of quasilinear payoffs.
Namely, with quasilinear payoffs, we can add an arbitrary additional balanced transfer rule between the agents depending on their initial types, and constant 0 is not a fair choice in practice.
For example, consider the deterministic setup with two agents and a binary public decision to be made in every round, where {\cblue Blue} always prefers YES by 100 and {\cred Red} always prefers NO by 101. Clearly, every decision should be NO, but any applied mechanism would probably ask {\cred Red} to pay some compensation for {\cblue Blue}.
So the only special property of the case of quasilinear payoffs is that the problems with efficiency and fairness can be split \emph{additively} into two independent problems because the Pareto optimal expected payoffs always form a hyperplane.

We note that in the setup in \cite{CsE} with competing agents, this specialty does not matter. Essentially because the outside option is a reference point for payoffs or because each agent $i$ submits an offer for a contract including a required expected payoff $f^i$.

\end{document}